\newtheorem{theorem}{Theorem}
\newtheorem{conjecture}[theorem]{Conjecture}
\newtheorem{definition}[theorem]{Definition}
\newtheorem{lemma}[theorem]{Lemma}
\newtheorem{proposition}[theorem]{Proposition}
\newtheorem{example}[theorem]{Example}
\newtheorem{Thm}[theorem]{Theorem}
\newtheorem{Lem}[theorem]{Lemma}
\newtheorem{Cor}[theorem]{Corollary}
\newtheorem{Def}[theorem]{Definition}
\newtheorem*{Conj*}{Conjecture}
\newenvironment{proofof}[1]{\begin{trivlist} \item {\bf Proof
#1:~~}}
  {\qed\end{trivlist}}
\newcommand{\mc}[1]{\ensuremath \mathcal{#1}}
\newcommand{\tmn}{\ensuremath T_{m \times n}}
\newcommand{\eat}[1]{}
\def\colorful{1}
\newcommand{\ignore}[1]{}
\newcommand{\C}{\mathcal{C}}
\newcommand{\F}{\mathbb{F}}
\newenvironment{Proof}{\medbreak
\noindent {\bf Proof:~}}{\unskip\nobreak\hfill\hskip 2em \qed\par\medbreak}
\begin{document}

\title{Maximally Recoverable Codes for Grid-like Topologies}

\author{Parikshit Gopalan \\ VMware Research \\ pgopalan@vmware.com
\and Guangda Hu \\ Princeton University \\ guangdah@cs.princeton.edu
\and Swastik Kopparty \\ Rutgers University \\ swastik.kopparty@gmail.com
\and Shubhangi Saraf \\ Rutgers University \\ shubhangi.saraf@gmail.com
\and Carol Wang \\ National Univ. of Singapore \\ elecaro@nus.edu.sg
\and Sergey Yekhanin \\ Microsoft Research \\ yekhanin@microsoft.com}
\date{}
\maketitle

\thispagestyle{empty}

\begin{abstract}

The explosion in the volumes of data being stored online has resulted in distributed storage systems transitioning to erasure coding based schemes. Yet, the codes being deployed in practice are fairly short. In this work, we address what we view as the main coding theoretic barrier to deploying longer codes in storage: at large lengths, failures are not independent and correlated failures are inevitable. This motivates designing codes that allow quick data recovery even after large correlated failures, and which have efficient encoding and decoding.

We propose that code design for distributed storage be viewed as a two step process. The first step is choose a {\it topology} of the code, which incorporates knowledge about the correlated failures that need to be handled, and ensures local recovery from such failures. In the second step one specifies a code with the chosen topology by choosing coefficients from a finite field $\mathbb{F}_q$. In this step, one tries to balance reliability (which is better over larger fields) with encoding and decoding efficiency (which is better over smaller fields).

This work initiates an in-depth study of this reliability/efficiency tradeoff. We consider the field-size needed for achieving {\it maximal recoverability}: the strongest reliability possible with a given topology. We propose a family of topologies called grid-like topologies which unify a number of topologies considered both in theory and practice, and prove the following results about codes for such topologies:

\begin{itemize}
\item The first super-polynomial lower bound on the field size needed for achieving maximal recoverability in a simple grid-like topology. To our knowledge, there was no super-linear lower bound known before, for any topology.

\item A combinatorial characterization of erasure patterns correctable by Maximally Recoverable codes for a topology which corresponds to tensoring MDS codes with a parity check code. This topology is used in practice (for instance see \cite{MLRH14}). We conjecture a similar characterization for Maximally Recoverable codes instantiating arbitrary tensor product topologies.
\end{itemize}

\end{abstract}

\newpage
\setcounter{page}{1}

\section{Introduction} \label{sec:intro}

The explosion in the volumes of data being stored online means that
duplicating or triplicating data is not economically feasible. This
has resulted in distributed storage systems employing erasure coding
based schemes in order to ensure reliability with low storage
overheads. Spurred by this, there has been an active line of research
in coding theory focusing on distributed storage. Two main paradigms
have emerged from this work: local reconstruction~\cite{GHSY,XOR_ELE}
and local regeneration~\cite{Dimakis_1}, both focusing on the
efficiency of the decoder in typical failure scenarios (which in
storage means one or a few machines being unavailable, perhaps
temporarily). The former focuses on the number of disk reads needed to
handle such failures, the latter on the amount of communication. In
the last few years, the theory around these codes has developed
rapidly. There are constructions known that achieve optimality for
various tradeoffs. A number of these codes have been deployed at scale
in the real world~\cite{HuangSX,MLRH14}.

Yet, the length of codes being used for data storage thus far is quite small: often in the low double digits.
The coding-theoretic motivation for moving to larger lengths is
obvious: coding at larger lengths allows better error-tolerance for a
given overhead. There is also ample practical motivation, coming from
the need to reduce storage costs. Increasingly, data stored in the
cloud are geographically distributed across data centers, so that even
if one location is offline for some time, data are still
accessible. The simple solution of replication across data centers is
expensive, and can nullify the gains from erasure coding within a data
center. Finally, historical trends suggest that the transition to
longer length codes should happen eventually. Thus it is important to
understand what the current barriers to using longer codes are.

In this work, we address what we view as the main coding theoretic
barrier to deploying longer codes in storage: at large lengths, the
assumption that various nodes fail independently is just not true,
correlated failures are inevitable. This motivates the task of {\em
  designing codes that allow quick data recovery even after large
  correlated failures, and which have efficient encoding and
  decoding.}

\subsection{Codes with a topology}

The coding theoretic challenges arising in distributed storage setting are very different from those encountered when codes are used for transmission, or even storage on a single  device. There are two main reasons behind it:

\begin{itemize}
\item {\bf Correlated failures:} In distributed storage, at large lengths, one cannot assume that individual codeword coordinates fail independently. One has to explicitly deal with large correlated failures, which might have different sources e.g. a rack failure, a datacenter failure, a simultaneous upgrade applied to a large group of machines, or failure of a power source shared by multiple machines. The structure of such correlated failures varies with deployment but is typically known at the code design stage, and  can be incorporated in the code layout.

\item {\bf The need for locality:} Locality addresses the challenge of efficiently serving requests for unavailable data and maintaining an erasure encoded representation. In particular, when a node or a correlated group of nodes fails, one has to be able to quickly reconstruct every lost node in order to keep the data readily available for the users and to maintain the same level of redundancy in the system. We say that a certain packet has {\it locality} $r$ if it can be recovered from accessing only $r$ other packets (think of $r$ as being much less than the codeword length). We would like to ensure locality for typical failure scenarios. At short lengths with independent failures, a single or a few failures might be a reasonable model for what is typical. But at longer lengths, we would like locality after correlated failures (like a data center being offline), which might mean that a constant fraction of machines is unavailable.
\end{itemize}

As a result, in designing codes for distributed storage one tries to incorporate knowledge about correlated failures in the design, in a way that guarantees efficient handling of such failures.
The kinds of code construction problems that arise from this are different from those in classical coding theory, but we feel they are ripe for theoretical analysis. To facilitate this, we propose viewing the design of erasure codes for distributed storage as a two step process, where we intentionally separate out incorporating real-world knowledge about correlated failure patterns from code specification, which is very much within of realm of coding theoretic techniques.

\begin{enumerate}

\item {\bf Picking a topology: } The first step is to determine the {\it topology} of the code, driven by the particular collection of correlated failures that need to be handled. Informally, one can think of a topology as specifying the {\em supports} for the parity check equations, but not the coefficients (or even which field they lie in). The topology specifies the number of redundant symbols and the data symbols that each of them depends on. This can be used to ensure the existence of short linear dependencies between specific codeword coordinates, so that the desired locality property holds for the correlated failure patterns of interest. This is the step which incroporates real-world knowledge about likely correlated failures into the design.

\item {\bf Specifying coefficients: } In the second step one explicitly specifies a code with the chosen topology.   We choose coefficients from a finite field $\mathbb{F}_q$, which fixes the  redundant symbols as explicit $\F_q$-linear combinations of data symbols respecting the dependency constraints from the previous stage. This step typically utilizes tools from classical coding theory, but the objectives are different:

\begin{itemize}
\item {\bf Optimizing encoding/decoding efficiency: } Encoding  a linear code and decoding it from erasures involve matrix vector multiplication and linear equation solving respectively. Both of these require performing numerous finite field arithmetic operations. Having small finite fields results in faster encoding and decoding and thus improves the overall throughput of the system \cite[Section 2]{Plank}. In theory, field sizes which scale polynomially in the codeword length are desirable. Coefficient sizes of a few bytes are preferred in practice.

\item {\bf Maximizing reliability: } Worst-case distance or the number of random failures tolerated are unsatisfactory reliability measures for codes with a prescribed topology. The notion of maximal recoverability first proposed by~\cite{CHL} and generalized by~\cite{GHJY} provides a conceptually simple answer to the question {\em what is the best code for a given topology?}. Once we fix a topology and a set of erasures, decoding reduces to solving a system of linear equations. Maximal recoverability requires that the code corrects every failure pattern which is permissible by linear algebra, given the topology. Equivalently, a Maximally Recoverable (MR) code corrects every erasure pattern that is correctable for some fixing of coefficients in the same topology.
\end{itemize}

\end{enumerate}

The current evidence suggests that it is generally hard to achieve both small field size and maximal recoverability
simultaneously. Reed Solomon codes are the one notable exception to this rule, they are maximally recoverable codes for the trivial topology, and they have a linear field size (which is known to be
optimal up to constant factors). Analogous results are not known even
in topologies that are only slightly more complex than Reed-Solomon (see for instance \cite{GHJY}). For arbitrary topologies,
random codes are maximally recoverable but over fields of exponential
size, and often nothing better is known.

This points at a possible tradeoff between these two
requirements. This tradeoff is the main subject of our work. It may be
the case that in some topologies, the field-sizes required to achieve
maximal recoverability are prohibitively large, so one needs to pick a
different point on the tradeoff curve. A starting point for exploring
this tradeoff is to understand the failure patterns that can be
corrected by maximally recoverable codes for a topology, a problem
that can again be challenging even in simple settings. Given this
discussion, we propose the following questions as the natural main
goals in the study of maximal recoverability.

For a given topology
\begin{itemize}
\item Determine the smallest field size over which MR codes exist.
\item Characterize the failure patterns that can be corrected by MR codes.
\item Find explicit constructions of MR codes over small fields.
\end{itemize}

In theory one could ask these questions about any topology, but the
important topologies are simple ones which model how machines tend to
be laid out within and across data centers. In this work, we propose a
family of topologies called grid-like topologies which unify a number
of topologies considered both in coding theory and practice. In short,
codes with grid-like topologies can be viewed as tensor products of
row and column codes, augmented with  global parity check
constraints. They provide a unified framework for MDS codes, tensor
product codes, LRCs and more (see the discussion in Section
\ref{Sec:Prelim}).

We prove the following results about codes for grid-like topologies:

\begin{itemize}
\item The first super-polynomial lower bound on the field size needed for achieving maximal recoverability in any topology (in fact our bound applies to a very simple grid-like topology).

\item A combinatorial characterization of erasure patterns correctable by Maximally Recoverable codes for a topology which corresponds to tensoring MDS codes with a parity check code. This topology is used in practice (for instance see Facebook's f4 storage system \cite{MLRH14}).

\item A new asymptotically optimal family of Maximally Recoverable codes for a certain basic topology giving an alternative proof to a result of~\cite[Theorem 2.2]{Blaum}.
\end{itemize}

\subsection{Outline}
Section~\ref{Sec:Prelim} gives a formal definition of
grid-like topologies and explains why that definition captures
the needs that arise in distributed storage. In Section~\ref{Sec:Results} we present formal statements of our three
main theorems: the lower bound for alphabet size of MR codes, the
combinatorial classification of erasure patterns correctable by MR
codes, and an upper bound for the alphabet size of MR codes.

In Section ~\ref{Sec:lb} we establish our alphabet size lower bound. In
Section~\ref{Sec:Class} we obtain the classification result. In
Section~\ref{Sec:H2} we give our new construction of maximally
recoverable codes.  We survey more related work in
Appendix \ref{Sec:Work}. In section~\ref{Sec:Open} we discuss the key
questions that remain open.

\subsection{Notation}\label{Sec:Notation}
We use the following standard mathematical notation:
\begin{itemize}
\item $[s]=\{1,\ldots,s\};$

\item Let ${\bf w}\in \mathbb{F}^n$ be a vector. Let $\mathrm{supp}({\bf w})\subseteq [n]$ denote the set of non-zero coordinates of ${\bf w}.$

\item $[n,k,d]_q$ denotes a linear code (subspace) of dimension $k,$ codeword length $n,$ and distance $d$ over a field $\mathbb{F}_q.$ We often write $[n,k,d]$ instead of $[n,k,d]_q$ when the particular choice of the field is not important.

\item Let $C$ be an $[n,k,d]$ code and $S\subseteq [n],$ $|S|=k.$ We say that $S$ is an information set if the restriction $C|_S=\mathbb{F}_q^k.$

\item An $[n,k,d]$ code is called Maximum Distance Separable (MDS) if $d=n-k+1.$ MDS codes have many nice properties. In particular an $[n,k,d]$ code is MDS if and only if every subset of its $k$ coordinates is an information set. Alternatively, an $[n,k,d]$ code is MDS if and only if it corrects any collection of $(n-k)$ simultaneous erasures~\cite{MS}.

\item Let $C_1$ be an $[n_1,k_1,d_1]$ code and $C_2$ be an $[n_2,k_2,d_2]$ code. The tensor product $C_1\otimes C_2$ is an $[n_1n_2,k_1k_2,d_1d_2]$ code where the codewords of $C_1\otimes C_2$ are matrices of size $n_1\times n_2,$ where each column belongs to $C_1$ and each row belongs to $C_2.$ If $U\subseteq [n_1]$ is an information set of $C_1$ and $V\subseteq [n_2]$ is an information set of $C_2;$ then $U\times V$ is an information set of $C_1\otimes C_2,$ e.g.,~\cite{MS}.
\end{itemize}

\section{Grid-like topologies}
\label{Sec:Prelim}

We will restrict our attention to fields of characteristic $2$,
the natural setting for storage. We propose studying maximal recoverability for a simple class of
topologies called grid-like topologies that unify and generalize many
of the layouts that are used in practice~\cite{HuangSX, MLRH14}.
We specify topologies via dual
constraints. This way of defining topologies simplifies the proofs,
however it might not be immediately clear that topologies defined like
that indeed capture the needs that arise in distributed storage. We
explain the connection in~Proposition~\ref{Prop:DualToPrimal}.

\begin{definition}\label{Def:Topology} (Grid-like topology)
Let $m\leq n$ be integers. Consider an $m \times n$ array of symbols
$\{x_{ij}\}_{i \in [m], j \in [n]}$ over a field $\F$ of
characteristic $2.$ Let $0\leq a\leq m-1,$ $0\leq b\leq n-1,$ and
$0\leq h\leq (m-a)(n-b)-1.$ Let $T_{m \times n}(a,b,h)$ denote the
topology where there are $a$ parity check equations per column, $b$
parity check equations per  row, and $h$ global parity check equations
that depend on all symbols. A code with this topology is specified by
field elements $\left\{\alpha_i^{(k)}\right\}_{i \in [m],k \in [a]}$,
$\left\{\beta_j^{(k)}\right\}_{j \in [n],k \in [b]}$ and
$\left\{\gamma_{ij}^{(k)}\right\}_{i \in [m], j \in [n],k \in [h]}$.
\begin{enumerate}
\item Each column $j \in [n]$ satisfies  the constraints
\begin{equation}\label{Eqn:Top1}
\sum_{i=1}^{m}\alpha_i^{(k)}x_{ij} = 0 \quad  \forall k \in [a].
\end{equation}
\item Each row $i \in [m]$ satisfies the constraints:
\begin{equation}\label{Eqn:Top2}
\sum_{j=1}^{n}\beta_j^{(k)}x_{ij} = 0 \quad  \forall k \in [b].
\end{equation}
\item The symbols satisfy $h$ global constraints given by
\begin{equation}\label{Eqn:Top3}
\sum_{i=1}^m\sum_{j=1}^{n}\gamma_{ij}^{(k)}x_{ij} = 0 \quad  \forall k \in [h].
\end{equation}
\end{enumerate}
A setting of $\{\alpha_i^{(k)}\}, \{\beta_j^{(k)}\},\{\gamma_{ij}^{(k)}\}$ from a field $\F \supseteq \F_2$ specifies a code $\C = \C(\{\alpha_i^{(k)}\},\{\beta_j^{(k)}\},\{\gamma_{ij}^{(k)}\})$ that instantiates the topology $T_{m \times n}(a,b,h)$.
\end{definition}

Intuitively, constraints~(\ref{Eqn:Top1}) above ensure that that there are local dependencies in every column; constraints~(\ref{Eqn:Top2}) ensure that that there are local dependencies in every row; and constraints~(\ref{Eqn:Top3}) provide additional reliability guarantees, if the guarantees provided by~(\ref{Eqn:Top1}) and~(\ref{Eqn:Top2}) alone are deemed not sufficient. In most settings of interest $a,b,$ and $h$ are rather small compared to $m$ and~$n.$

In what follows we refer to constraints~(\ref{Eqn:Top1}) as specifying a code $C_{\mathrm{col}}\subseteq \mathbb{F}^m,$ and to constraints~(\ref{Eqn:Top2}) as specifying a code $C_{\mathrm{row}}\subseteq \mathbb{F}^n.$ When $h =0$, the resulting code is exactly $C_{\mathrm{col}} \otimes C_{\mathrm{row}}$. For larger $h$, we can view $\C$ as a subspace of $C_{\mathrm{col}} \otimes C_{\mathrm{row}}$ with co-dimension $h$.

\begin{definition}\label{Def:CorrectablePatt}
 A failure pattern is a set $E \subseteq [m] \times [n]$ of symbols that are erased. Pattern $E$ is correctable for the topology $T_{m \times n}(a,b,h)$ if there exists a code instantiating the topology where the variables $\{x_{ij}\}_{(i,j) \in E}$ can be recovered from the parity check equations.
\end{definition}

\begin{definition}\label{Def:MR}
A code $\C$ that instantiates the topology $T_{m \times n}(a,b,h)$ is Maximally Recoverable (MR) if it corrects every failure pattern that is correctable for the topology.
\end{definition}
In other words a code that instantiates a topology is maximally recoverable, if it corrects all erasure patterns that are information theoretically correctable given the topology (dependency) constraints. We also note that \cite{GHJY} define the notion of a topology and maximal recoverabilty in full generality. Since our focus here is only on grid-like topologies, we refer the curious reader to that paper for the general definition.
We now state a basic proposition about such codes, the proof is in Appendix \ref{sec:app}.

\begin{proposition}\label{Prop:DualToPrimal}
Let $C$ be an MR instantiation of the topology $T_{m\times n}(a,b,h).$ We have
\begin{enumerate}
\item The dimension of $C$ is given by
      \begin{equation}\label{Eqn:MrDim}
      \dim C = (m-a)(n-b)-h.
      \end{equation}
      Moreover,
      \begin{equation}\label{Eqn:MrDimRowCol}
      \dim C_{\mathrm{col}} = m-a \quad\mathrm{and}\quad  \dim C_{\mathrm{row}} = n-b.
      \end{equation}

\item Let $U\subseteq [m],$ $|U|=m-a$ and $V\subseteq [n],$ $|V|=n-b$
  be arbitrary. Then $C|_{U\times V}$ is
  an $$[(m-a)(n-b),(m-a)(n-b)-h,h+1]$$ MDS code. Any subset
  $S\subseteq U\times V,$ $|S|=(m-a)(n-b)-h$ is an information set.

\item Assume
  \begin{equation}\label{Eqn:TechLoc}
    h\leq (m-a)(n-b)-\max\{(m-a),(n-b)\},
      \end{equation}
      then the code $C_{\mathrm{col}}$ is an $[m,m-a,a+1]$ MDS code
      and the code $C_{\mathrm{row}}$ is an $[n,n-b,b+1]$ MDS
      code. Moreover, for all $j\in [n],$ $C$ restricted to column $j$
      is the code $C_{\mathrm{col}}$ and for all $i\in [m],$ $C$
      restricted to row $i$ is the code $C_{\mathrm{row}}.$
\end{enumerate}
\end{proposition}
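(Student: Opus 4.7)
The plan is to prove Part~2 first via an explicit decoding template, and then derive Parts~1 and~3 as structural consequences. Fix any $U\subseteq[m]$ with $|U|=m-a$, any $V\subseteq[n]$ with $|V|=n-b$, and any $F\subseteq U\times V$ with $|F|\le h$, and consider the erasure pattern
\[
E_F \;=\; ([m]\times[n])\setminus\bigl((U\times V)\setminus F\bigr).
\]
I will exhibit one instantiation of the topology that corrects $E_F$ — showing $E_F$ is correctable for the topology and hence correctable by the MR code $\C$. Choose $\alpha$'s so $C_{\mathrm{col}}$ is an $[m,m-a,a+1]$ MDS code, $\beta$'s so $C_{\mathrm{row}}$ is an $[n,n-b,b+1]$ MDS code, and $\gamma$'s such that $\{\gamma^{(k)}_{ij}\}_{k\in[h],(i,j)\in U\times V}$ is a parity-check matrix of an $[(m-a)(n-b),(m-a)(n-b)-h,h+1]$ MDS code (for instance Reed--Solomon), with $\gamma^{(k)}_{ij}=0$ outside $U\times V$. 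Decoding of $E_F$ then proceeds in four stages: recover $([m]\setminus U)\times V$ column-by-column using the MDS $C_{\mathrm{col}}$, then $U\times([n]\setminus V)$ row-by-row using $C_{\mathrm{row}}$, then the corner $([m]\setminus U)\times([n]\setminus V)$ via either axis, and finally the $|F|\le h$ unknowns in $F$ via the $h$ global equations, whose coefficient matrix on those unknowns is a nonsingular submatrix of an MDS parity check. Since $\C$ is MR, $\C$ also corrects $E_F$, so the projection $\C\to\F^{(U\times V)\setminus F}$ is injective.

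Specializing to $|F|=h$ and varying $F$, this injection shows every $(m-a)(n-b)-h$ subset of $U\times V$ is an information set of $\C|_{U\times V}$, which combined with the dimension equality from Part~1 identifies $\C|_{U\times V}$ as the asserted MDS code. For Part~1, the same injection gives $\dim\C\le(m-a)(n-b)-h$; the matching lower bound follows from lower semi-continuity of $\dim\ker$ for the parameterized parity-check matrix $H(\alpha,\beta,\gamma)$ — any specialization of the generic instantiation can only increase dimension, and the generic instantiation has dimension exactly $(m-a)(n-b)-h$. Hence $\dim\C=(m-a)(n-b)-h$ and $\mathrm{rank}\,H=na+mb-ab+h$. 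Decomposing this as $\mathrm{rank}\,H=f(r_\alpha,r_\beta)+h'$ with $f(r,s)=nr+ms-rs$, with $r_\alpha\le a$ and $r_\beta\le b$ the ranks of the $\alpha$- and $\beta$-matrices, and with $h'\le h$ the rank of the globals modulo the tensor span, the strict monotonicity of $f$ on $[0,a]\times[0,b]$ (valid since $n-b,m-a\ge 1$) forces $r_\alpha=a$ and $r_\beta=b$, yielding $\dim C_{\mathrm{col}}=m-a$ and $\dim C_{\mathrm{row}}=n-b$. For Part~3, the hypothesis $h\le(m-a)(n-b)-\max(m-a,n-b)$ gives $m-a\le\dim\C|_{U\times V}$; picking any $V\ni j$, the MDS property from Part~2 makes every $(m-a)$-subset of $U\times V$, including $U\times\{j\}$, an information set, so $\C$ restricted to column $j$ projects surjectively onto $\F^U$ for every $(m-a)$-subset $U\subseteq[m]$. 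This forces $\C|_{\mathrm{col}\,j}=C_{\mathrm{col}}$ and exhibits $C_{\mathrm{col}}$ as $[m,m-a,a+1]$ MDS; the row statement is symmetric.

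The main content lies in the correctability of $E_F$ — specifically, arranging $\gamma|_{U\times V}$ so that the final $h\times h$ recovery step on $F$ is nonsingular for every $h$-subset $F$; using an MDS parity-check matrix for $\gamma|_{U\times V}$ achieves exactly this uniformly in $F$. Everything else — the rank identity underlying Part~1 and the MDS projection argument for Part~3 — follows cleanly once the Part~2 decoding template is in place.
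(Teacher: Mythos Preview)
Your approach is essentially the paper's: for each relevant erasure pattern, exhibit a witness instantiation (MDS column and row codes, with MDS-type global parities supported on $U\times V$) that corrects it, and then transfer correctability to $\C$ via the MR property. The decoding template for $E_F$, the MDS conclusion for $\C|_{U\times V}$, and the Part~3 surjectivity argument are all the same as in the paper, just reordered.

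Two small issues are worth flagging. First, your lower bound $\dim\C\ge(m-a)(n-b)-h$ via semicontinuity is circular as written: you assert that the generic instantiation has dimension \emph{exactly} $(m-a)(n-b)-h$, but semicontinuity only gives $\dim_{\mathrm{generic}}\le\dim_{\mathrm{explicit}}=(m-a)(n-b)-h$, which is the wrong inequality. To get $\dim_{\mathrm{generic}}\ge(m-a)(n-b)-h$ you still need the direct constraint count (the $na+mb+h$ parity rows have $ab$ built-in dependencies, so $\mathrm{rank}\,H\le na+mb-ab+h$), which is exactly the paper's argument and applies to every instantiation directly --- the semicontinuity detour adds nothing. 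Second, in Part~3 you call $U\times\{j\}$ an ``information set'' of $\C|_{U\times V}$; this is a misnomer since $|U\times\{j\}|=m-a$ may be strictly less than $\dim\C|_{U\times V}=(m-a)(n-b)-h$. What you actually need (and what follows from MDS plus the dimension inequality) is that $\C|_{U\times V}$ projects \emph{surjectively} onto $U\times\{j\}$, and your subsequent deduction that $\C|_{\mathrm{col}\,j}=C_{\mathrm{col}}$ is MDS then goes through.
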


Let us use this proposition to see why grid-like topologies arise naturally in distributed storage.
Consider a setting where we have $m$ datacenters, each with $n$ machines where $m \ll n$. One can use
a code instantiating $T_{m\times n}(a,b,h)$ to distribute data across the datacenters.

\begin{itemize}
\item The code is systematic, item (2) tells us how to select the information symbols. So when no failures happen, the data are readily accessible.

\item When up to $a$ datacenters are unavailable, each packet can be recovered by accessing at most $(m-a)$ symbols across the remaining datacenters using the column MDS code. This involves cross datacenter traffic, but since $m$ is small,  we do not need to read too many symbols. When fewer than $a$ data centers are offline, the MDS property implies that any $a$ packets being received suffices for successful decoding.

\item When up to $b$ failures happen within a particular datacenter, every packet can be accessed by performing $(n-b)$ reads within the datacenter using the row MDS code.

\item The $h$ global parities improve the worst case distance of the code. They are only used when row and column decoding fails, since using them in decoding involves using all the code symbols and hence requires a lot of communication across datacenters.
\end{itemize}

This shows that grid-like topologies give a way to guarantee local recovery
after certain kinds of correlated failures. Depending on the precise kind of
correlated failures and the locality guarantees desired, there might
be other ways. But in certain (restricted but important) settings,
one can uniquely specify a topology as the only possible solution. Suppose our goal is to
provide good locality for all data symbols after one machine failure, and failure tolerance
for up to $h$ failures.\cite[Theorem 9]{GHSY} showed that for some
parameter settings, all optimal length codes that guarantee this must
have a fixed topology (which is not grid-like but is closely related to $T(1,0,h)$).

More generally, grid-like topologies do provide a unifying
framework for several settings that arise in practice and have
been studied in the literature (see examples below).

\begin{enumerate}

\item The topology $T_{m \times n}(1,0,h)$
  has received a considerable
  amount of attention, especially in the recent work on LRCs~\cite{BHH,GHSY,TB,GHJY,BK,LL,HY,BPSY}.
  Correctable patterns for
  this topology are fully characterized~\cite{BHH,GHJY}. The best
  known constructions~\cite{GHJY} are slightly better than random
  codes in their alphabet size, but are still far from polynomial in
  the block-length. A construction of codes over fields of linear size
  with the largest possible distance for $T_{m\times n}(1,0,h)$ (a
  weaker property than maximal recoverability) has been given
  in~\cite{TB}.

\item A maximally recoverable code instantiating a topology closely related to $T_{2\times 7}(0,1,2)$ is used by Microsoft's Azure storage~\cite{HuangSX}.

\item A code instantiating $T_{3\times 14}(1,4,0)$ is used by Facebook
  in its f4 storage system~\cite{MLRH14}. The code is the tensor
  product of a Reed-Solomon code within data centers with a parity
  check code across data centers.

\item Maximum distance separable (MDS) codes can be viewed as
  maximally recoverable codes for $T_{1 \times n}(0,0,h)$. Reed
  Solomon codes yield explicit constructions over an alphabet of size
  $n,$ and there are lower bounds of $\Omega(n)$ on the field
  size~\cite{MainMDS1,MainMDS2}.

\end{enumerate}

\section{Formal statements of our results}\label{Sec:Results}

\subsection{A super-polynomial field-size lower bound}

Our main result is the first super-polynomial lower bound on the field
size required for maximally recoverable codes.
Previously, there was not even a {\em super-linear} lower bound known,
for {\em any} topology. Our lower bounds apply to all topologies $T_{m \times n}(a,b,h)$
where $a \geq 1, b \geq 1, h \geq 1$, and are meaningful as long as $h \ll \min(m-a,n-b)$. If
we think of the setting where $m = n$ are growing and $a,b,h =O(1)$,
then our lower bound is $\exp(\Omega((\log(n))^2))$.

\begin{Thm}
\label{Thm:LBgen}
Assume that $a \geq 1, b\geq 1, h \geq 1$. Let $n' = \min(m-a
  +1,n-b +1)$ and let $h \leq n'$. Any maximally recoverable code for the topology $T_{m \times
    n}(a,b,h)$ requires field size $q = \exp(\Omega((\log(n'/h))^2))$.
\end{Thm}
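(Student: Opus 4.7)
The plan is to extract from any maximally recoverable instantiation of $T_{m\times n}(a,b,h)$ a combinatorial structure in $\mathbb{F}_q$ whose existence forces $q$ to be quasi-polynomially large in $n'/h$. I would proceed in three main stages.

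First, I would reduce the problem to a simpler base topology. By Proposition~\ref{Prop:DualToPrimal}(2), restricting an MR code to any $(m-a)\times (n-b)$ information sub-grid produces an MDS code of codimension $h$; combined with the MDS structure of the row and column codes from Proposition~\ref{Prop:DualToPrimal}(3), this lets me reduce to a clean base topology --- essentially $T_{n'\times n'}(1,1,h)$ --- losing only constants in the parameters. The dual-constraint description of the code then has just $n'+n'+h$ active parities to reason about, with the row/column codes of distance $2$ and a single ``global'' block of $h$ additional constraints.

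Next, I would exhibit a large family of correctable failure patterns whose correctability conditions jointly constrain the code coefficients. A natural family consists of patterns of the form $E_\sigma = \{(i,\sigma(i)) : i \in I\}$ for an injection $\sigma : I \to [n']$, possibly augmented with a bounded number of extra erasures to saturate the global parity budget. For each such $E_\sigma$, correctability is equivalent to the non-vanishing of a specific determinant polynomial in $\alpha_i, \beta_j, \gamma_{ij}$; combined with Proposition~\ref{Prop:DualToPrimal}(2) one checks directly that every $E_\sigma$ in the family is correctable and so must be corrected by the MR code. From these non-vanishing determinants I would distill a list of derived quantities $\lambda_1,\dots,\lambda_N \in \mathbb{F}_q$ (built from $h$-fold products and differences of the row/column coefficients) that must satisfy a rich system of pairwise distinctness and ``no-collision'' conditions.

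The quasi-polynomial shape $\exp(\Omega((\log(n'/h))^2))$ strongly suggests an iterative argument in which each of $\Theta(\log(n'/h))$ rounds contributes a multiplicative $\Omega(\log(n'/h))$ factor to $\log q$. Concretely, if $q$ were too small, at round $k$ a pigeonhole would identify a subset of rows whose derived quantities $\lambda$ collide, reducing the problem to an instance on half as many rows but with one additional algebraic relation forced on the surviving coefficients. Iterating the halving $\Theta(\log(n'/h))$ times, the accumulated relations would let me assemble an $E_\sigma$ in the above family whose correctability determinant provably vanishes --- contradicting maximal recoverability and therefore forcing $q \geq \exp(\Omega((\log(n'/h))^2))$.

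The main obstacle will be making this iteration go through cleanly. Each round must simultaneously (a) produce a correctable pattern whose non-degeneracy condition is sensitive to the algebraic relations introduced in earlier rounds, and (b) allow a pigeonhole inside a small field to falsify that non-degeneracy. Getting the quadratic-in-$\log$ exponent hinges on each round roughly doubling the algebraic complexity of the forbidden configuration while only halving the effective size parameter $n'/h$; balancing these two rates, and verifying that the patterns $E_\sigma$ chosen in later rounds remain correctable under the coefficient restrictions imposed by earlier rounds, is the heart of the proof and will ultimately dictate the precise shape of the pattern family in the second stage.
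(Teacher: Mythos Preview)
Your reduction in the first stage is on the right track: the paper does puncture an MR instantiation of $T_{m\times n}(a,b,h)$ down to an MR instantiation of $T_{(m-a+1)\times(n-b+1)}(1,1,h)$, and then normalises so that all row/column parities are simple XORs.

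The genuine gap is in stage two. Your proposed family $E_\sigma=\{(i,\sigma(i)):i\in I\}$ consists of \emph{matchings}: one erasure per row and per column. In $T_{n'\times n'}(1,1,h)$ such a pattern is recovered entirely by the local XOR constraints, so its correctability determinant does not involve the global coefficients $\gamma_{ij}^{(k)}$ at all. You would extract no constraint on the code from these patterns, and ``a bounded number of extra erasures'' does not change this unless the extras force at least two erasures per row \emph{and} per column throughout the support. The patterns that matter are \emph{simple cycles} in $K_{n',n'}$ (and for general $h$, vertex-disjoint unions of $h$ simple cycles): these are exactly the irreducible correctable patterns, and a simple cycle $C$ is correctable by $\C(\{\gamma_e\})$ iff $\sum_{e\in C}\gamma_e\neq 0$. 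That single scalar condition per cycle is what you must exploit.

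Once you have the right condition, the lower bound becomes the purely combinatorial statement that any labelling $\gamma:E(K_{n',n'})\to\mathbb{F}_2^d$ with $\sum_{e\in C}\gamma_e\neq 0$ for every simple cycle must have $d=\Omega((\log n')^2)$. The paper proves this not by a pigeonhole-and-contradiction iteration on the code, but by a direct counting argument: the cycle condition implies that any two vertex-disjoint $u$--$v$ paths have distinct weights, and one shows by induction on the path length $k$ that at most $k^{O(\log k)}D^{k-\log k}$ paths of length $k$ can share a weight. Comparing with the total number of $u$--$v$ paths of length $k\approx\sqrt{n'}$ yields the bound. Your stage-three sketch has the right ``halve and recurse'' flavour, but the recursion is on path length inside a single fixed labelling, not on a shrinking instance with accumulated algebraic relations; without the cycle condition from stage two, there is nothing for the recursion to bite on.
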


The simplest grid-like topology to which this bound applies is $T_{n \times n}(1,1,1)$. Since $T_{n \times n}(1,1,0)$ is
just the parity tensor code, $T_{n \times n}(1,1,1)$ can be viewed as
the parity tensor code with a single global parity check equation
added to it.  Indeed we get our strongest lower
bound for this topology.

\begin{Cor}\label{Cor:LB}
Any maximally recoverable code for the topology $T_{n \times
  n}(1,1,1)$ requires field size $q = \exp(\Omega((\log(n))^2)).$
\end{Cor}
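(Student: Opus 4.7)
The plan is to observe that Corollary~\ref{Cor:LB} is the direct specialization of Theorem~\ref{Thm:LBgen} to the parameter choice $m=n$, $a=b=h=1$. First I would check the hypotheses: both $a\geq 1$ and $b\geq 1$ hold, and
\[
n' \;=\; \min(m-a+1,\,n-b+1) \;=\; \min(n,n) \;=\; n,
\]
so the requirement $h\leq n'$ reduces to $1\leq n$, which is trivial. Substituting into the theorem's conclusion gives $q = \exp(\Omega((\log(n/1))^2)) = \exp(\Omega((\log n)^2))$, exactly the claimed bound.

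Beyond this purely formal step, it is worth sketching how the underlying argument specializes to this simplest nontrivial topology, since $T_{n\times n}(1,1,1)$ is the cleanest setting in which the super-polynomial gap appears. By Proposition~\ref{Prop:DualToPrimal}, under this topology $C_{\mathrm{col}}$ and $C_{\mathrm{row}}$ are both forced to be single parity-check MDS codes of length $n$, so after rescaling we may take all $\alpha_i=\beta_j=1$. The code is then determined by the single extra check $\sum_{i,j}\gamma_{ij}x_{ij}=0$ on top of row and column parities, and its codewords form a codimension-$1$ subspace of the parity tensor code $C_{\mathrm{col}}\otimes C_{\mathrm{row}}$, parameterized by the matrix $\Gamma=(\gamma_{ij})\in\F^{n\times n}$.

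The strategy I would follow is then standard for field-size lower bounds for MR codes. \emph{Step 1:} exhibit a combinatorially rich family $E_1,\dots,E_N$ of erasure patterns that are each correctable by \emph{some} instantiation of the topology, so that MR forces the fixed instantiation $\Gamma$ to correct every one of them. Natural candidates are patterns with one erasure per row and per column (always correctable by row/column parities) modified by moving a single erasure so that it shares a row or column with another, so that decoding must invoke the global parity to resolve the resulting ambiguity. \emph{Step 2:} translate correctability of each $E_t$ into the non-vanishing of a specific determinant that is a polynomial in the entries of $\Gamma$. \emph{Step 3:} argue that a single $\Gamma$ over $\F_q$ satisfying all $N$ of these determinantal inequalities forces the entries of $\Gamma$ to behave like a Sidon-type set in $\F_q$, which in turn forces $\log q = \Omega((\log n)^2)$.

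The main obstacle I anticipate is Step 3. Polynomially many inequalities over $\F_q$ would yield only $q = n^{O(1)}$ via a Schwartz--Zippel union bound; extracting a super-polynomial lower bound requires showing that the determinantal conditions collectively encode a combinatorial object (such as a $B_h$-set, an arithmetic-progression-free configuration, or an analogous additive structure) whose size forces $\log q = \Omega((\log n)^2)$. This is the heart of Theorem~\ref{Thm:LBgen} and is exactly where the $(\log n)^2$ exponent comes from.
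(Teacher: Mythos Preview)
Your formal derivation in the first paragraph is correct and is a complete proof: the paper does present Corollary~\ref{Cor:LB} as the specialization of Theorem~\ref{Thm:LBgen} to $m=n$, $a=b=h=1$, and your check of the hypotheses and substitution into the conclusion are accurate.

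That said, in the paper's actual development the logical dependency runs the \emph{other} way: Corollary~\ref{Cor:LB} is proved first and directly, and Theorem~\ref{Thm:LBgen} is then obtained by puncturing a general $T_{m\times n}(a,b,h)$ instantiation down to $T_{(m-a+1)\times(n-b+1)}(1,1,h)$ and invoking the $h=1$ case on a subgrid. The paper's direct proof of Corollary~\ref{Cor:LB} is short: after normalizing the row and column parities to all-ones (Lemma~\ref{lem:1}), one shows that the irreducible correctable patterns for $T_{n\times n}(1,1,1)$ are exactly the simple cycles in $K_{n,n}$, and that a simple cycle $C$ is correctable iff $\sum_{e\in C}\gamma_e\neq 0$ (Corollary~\ref{Cor:Sim}); the bound then follows immediately from Lemma~\ref{Lemma:Main}.

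Your informal sketch departs from this in two places. First, the family of patterns that drives the bound is not ``one erasure per row and column with one moved''; it is the set of all simple cycles (each involved row and column carries exactly two erasures). Second, the combinatorial core is not a Sidon-type or $B_h$-set argument on the entries of $\Gamma$. It is the path-counting Lemma~\ref{lem:main}: in any edge-weighted graph where no two vertex-disjoint paths between the same endpoints have equal weight, the number of length-$k$ paths of any fixed weight is at most $k^{\log_2 k+1}D^{k-\log_2 k-1}$. Applied to $K_{n,n}$ with $k\approx\sqrt n$, this forces $2^d\geq n^{\Omega(\log n)}$. So while your formal proof is fine, the mechanism you conjecture in Steps~1--3 is not the one the paper uses.
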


There is also an explicit  construction \cite[Theorem 31]{GHJY} that gives
MR codes for this topology over fields of size $\exp(O(n \log n))$,
matching the probabilitic construction. But this still leaves a gap
between upper and lower bounds.

The key technical ingredient in proving our lower bounds is the following combinatorial lemma, which might be of independent interest.
\begin{lemma}
\label{Lemma:Main}
Let $\gamma:[n] \times [n] \to \mathbb{F}_2^d$ be a labelling of the edges of the complete bipartite graph $K_{n,n}$ by bit vectors such that for any {\em simple} cycle $C$,
\[ \sum_{e \in C}\gamma_e \neq 0.\]
Then we have $d \geq \Omega((\log(n))^2)$.
\end{lemma}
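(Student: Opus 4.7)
The plan is to begin by normalizing $\gamma$ via a coboundary transformation, then extract constraints from cycles of increasing length, and finally boost the dimension bound via a pigeonhole argument over long cycles.

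\textbf{Step 1 (Normalization).} Each vertex of a simple cycle in $K_{n,n}$ is incident to exactly two cycle edges. Hence for any vertex potential $f$, adding $f(u) + f(v)$ to each edge label $\gamma_{uv}$ does not change the sum $\sum_{e \in C}\gamma_e$ for any simple cycle $C$: every potential term $f(w)$ is added twice and cancels in characteristic $2$. Using this freedom, I choose $f$ so that $\gamma_{1,j} = 0$ for all $j$ and $\gamma_{i,1} = 0$ for all $i$. Writing $r_i(j) := \gamma_{ij}$ for $i, j \geq 2$, the normalized submatrix captures all nontrivial information.

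\textbf{Step 2 (Short cycles).} The $4$-cycle condition, applied to cycles through vertex $1$ on either side, yields: every $\gamma_{ij}$ with $i, j \geq 2$ is nonzero, each $r_i$ is an injective map $[n] \to \mathbb{F}_2^d$, and for each $i \neq i'$ the sum $r_i + r_{i'}$ is injective. Moreover, in the standard $4$-cycle basis for the cycle space of $K_{n,n}$, the sum of two basis elements $e_{i_1 j_1} + e_{i_2 j_2}$ is itself a simple $6$-cycle whenever $i_1 \neq i_2$ and $j_1 \neq j_2$, which forces all $(n-1)^2$ labels $\{\gamma_{ij}\}_{i,j \geq 2}$ to be pairwise distinct. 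Already this gives the preliminary bound $d \geq 2 \log(n-1) - O(1)$.

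\textbf{Step 3 (Zigzag cycles and the quadratic bound).} To push past $2 \log n$, the plan is to use simple cycles of length $2k$ for $k = \Theta(\log n)$ that pass through the edge $(1,1)$. Such a cycle is specified by distinct rows $i_2, \ldots, i_k$ and distinct columns $j_2, \ldots, j_k$ in $[2, n]$, and has edge-label sum $\sum_{t=2}^{k} \gamma_{i_t j_{t-1}} + \gamma_{i_t j_t}$ (with the convention $j_1 := 1$, so the first term reduces to $\gamma_{i_2 j_2}$). Suppose for contradiction that $d < c (\log n)^2$ for a small constant $c > 0$. I plan to construct such a zigzag with zero sum greedily: at step $t$, having committed to rows $i_2, \ldots, i_t$ and columns $j_2, \ldots, j_t$ with running partial sum $P_t \in \mathbb{F}_2^d$, I pick the next pair $(i_{t+1}, j_{t+1})$ from the $\Omega(n)$ remaining candidates so that $P_{t+1}$ falls into a target coset of progressively smaller dimension. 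Pigeonhole over $\mathbb{F}_2^d$, combined with the constraints from all \emph{shorter} simple cycles that can ``shortcut'' between intermediate vertices of the partial walk, prunes the extensions; after $\Theta(\log n)$ steps the target coset becomes trivial and $P_k = 0$, closing a zero-sum simple cycle of length $2k$ and contradicting the hypothesis.

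\textbf{Main obstacle.} The heart of the argument is quantifying how much freedom is consumed per extension step. A naive pigeonhole over partial sums loses only one bit per step, yielding only the $\Omega(\log n)$ bound from Step~2. The required multiplicative factor of $\Omega(\log n)$ bits per step must come from leveraging the simultaneous constraints imposed by simple cycles of \emph{all} lengths that are compatible with the growing partial zigzag---in particular, the shortcut cycles that connect intermediate vertices $(i_s, j_s)$ back to vertex~$1$, each of which independently forbids a whole family of extensions. Formulating the correct potential function that decreases by $\Omega(\log n)$ per step, and verifying this decrease via the joint combinatorial structure of these shortcut cycles, is the main technical challenge where I expect the bulk of the proof's work to lie.
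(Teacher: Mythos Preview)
Your Steps~1 and~2 are correct but only recover the baseline $d \geq 2\log n - O(1)$. The entire content of the lemma lies in Step~3, and there your proposal is not a proof but an acknowledgment of the difficulty: you state that the ``main technical challenge'' is to find a potential that decreases by $\Omega(\log n)$ per greedy step, without supplying one. As written, the greedy scheme has no mechanism for this gain. Each ``shortcut cycle'' through an intermediate vertex of your partial zigzag forbids exactly \emph{one} value of a partial sum, not a coset of codimension $\log n$; with only polynomially many shortcut constraints against $\Omega(n^2)$ extension choices, pigeonhole naturally yields one bit per step, which is the $\Omega(\log n)$ bound you already have. There is also a scale mismatch: with $k = \Theta(\log n)$, the total number of length-$2k$ simple cycles through $(1,1)$ is only $n^{O(\log n)}$, so even a lossless pigeonhole over their sums sits right at the target $d = \Theta((\log n)^2)$ with no slack to absorb the losses of a step-by-step argument.

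The paper's proof follows a different route. It reformulates the hypothesis as \emph{Property~$\mathcal{A}$}: any two vertex-disjoint simple paths between the same endpoints have distinct weights. The core lemma then bounds, for fixed endpoints, length $k \leq \sqrt{D}$, and target weight $\gamma_0$, the number of simple $k$-paths of weight $\gamma_0$ by $k^{\log_2 k + 1} D^{k - \log_2 k - 1}$, where $D$ is the maximum degree. The induction is the idea your plan is missing: fix one path $p_0$ in the set; by Property~$\mathcal{A}$ every other path in the set must \emph{intersect} $p_0$ at some internal vertex. Pigeonhole on the intersection position fixes a common midpoint shared by a $1/(k-1)^2$ fraction of the paths; then one recurses on the longer half (length $\geq k/2$), gaining a factor of roughly $D/k^2$ at each of the $\log_2 k$ levels of recursion. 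Finally, the lemma is applied with $k = \Theta(\sqrt{n})$ --- not $\Theta(\log n)$ --- and compared against the total count $\approx n^{k-1}$ of length-$k$ paths to deduce $2^d \geq n^{\Omega(\log n)}$. The recursive intersection structure, rather than a one-step-at-a-time greedy construction, is what extracts the quadratic exponent.
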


Theorem \ref{Thm:LBgen} clarifies the picture of which grid-like
topologies might admit MR codes with polynomial field sizes. Let us consider the setting when $ a,b,h$ are
$O(1)$ and $m,n$ are growing. Circumventing the lower bound requires at least one of
$a,b,h$ to be $0$. After accounting for symmetries and trivialities,
this leaves two families of grid-like topologies:

\begin{enumerate}
  \item Tensor Products: $T_{m \times n}(a,b,0)$. As we will see in
    the next section, we do not know explicit MR codes for this
    topology (or even a characterization of correctable
    error patterns), but such codes might exist over small fields.

  \item Generalized Locally Recoverable Codes: $T_{m \times n}(a,0,h)$. These
    codes provide locality of $m$ after $a$ erasures, and can tolerate
    $a + h$ worst-case failures (in LRCs, one typically considers $a
    =1$ and $m =O(1)$). This generalization was first defined and
    studied by \cite{PKLV}. For this
    topology, one can extend the constructions in \cite{GHJY}  to
    derive codes with field
    size $(mn)^{O_{a,h}(1)}$. The only lower bound we know is $\Omega(mn)$. Getting a field size which is
    a fixed polynomial in $mn$ is open.
\end{enumerate}

We view resolving these two questions as central to the development of
MR codes, and our next two results make some progress on them.

\subsection{Characterizing correctable erasure patterns in tensor products}

Perhaps the simplest grid-like topology where we cannot characterize correctable
patterns of erasures is the topology $T_{m \times n}(a,b,0),$ which
can be viewed as the tensor of a row and column code, each of which is
MDS (by the last item in Proposition~\ref{Prop:DualToPrimal}). Tensor-product codes are
ubiquitous, especially in storage \cite{RR72}. They are typically
decoded using iterative row-column decoding: if some column has $a$
or fewer erasures, or some row has $b$ for fewer erasures, we decode
it. We repeat until either all erasures have been corrected, or we are
stuck.

When the decoder is stuck, we reach an failure pattern $E$ where
each column in the support of  $E$ has (strictly) more than $a$ erasures, and each row has
more than $b$ erasures. We refer to such patterns as {\em irreducible}.
Recall that by Definiton \ref{Def:CorrectablePatt}, a pattern $E$ is
correctable if one can solve for the missing symbols by applying Gaussian elimination
over all the constraints enforced by the tensor product code ($a$ per
column, $b$ per row). This raises the question: {\em are irreducible patterns uncorrectable?} Or
equivalently, how does iterative low column decoding (which one could
view of as local Gaussian elimination) compare to unrestricted/global
Gaussian elimination? This is a natural question which
has not been addressed previously to our knowledge.

Theorem \ref{Th:Class} implies that there exist irreducible patterns that are
correctable by maximally recoverable codes (see Figure \ref{fig:eg}
for a simple example), hence iterative row-column decoding can be weaker than unrestricted Gaussian elimination.
This raises the question of characterizing which (irreducible) patterns are correctable for MR tensor product
codes. This is the subject of our second result. We present a simple
necessary condition for a pattern to be correctable. We conjecture
that this condition is also sufficient, and prove in in the setting
where $a =1$.

\begin{definition}
\label{Def:RegPattern}
Consider the topology $T_{m\times n}(a,b,0)$ and an erasure pattern $E.$ We say that $E$ is regular if for all $U\subseteq [m],$ $|U|=u$ and $V\subseteq [n],$ $|V|=v$ we have
\begin{equation}\label{Eqn:RegIntersect}
|E\cap \left(U\times V\right)|\leq va+ ub -ab.
\end{equation}
\end{definition}

It is not hard to see that regularity is in fact necessary for
correctability (see Section \ref{sec:app}).

\begin{lemma}\label{Lemma:RegNecess}
If $E$ is not a regular pattern, then it is not correctable for $T_{m\times n}(a,b,0)$.
\end{lemma}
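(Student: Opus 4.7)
The plan is to prove the contrapositive directly: given a witness of non-regularity for $E$, I will produce, for \emph{every} instantiation $C$ of $T_{m\times n}(a,b,0)$, a nonzero codeword of $C$ supported in $E$. This shows that no instantiation corrects $E$, i.e.\ that $E$ is not correctable for the topology.

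Fix any instantiation $C$. Since $h = 0$ we have $C = C_{\mathrm{col}} \otimes C_{\mathrm{row}}$, where $C_{\mathrm{col}} \subseteq \F^m$ is cut out by the $a$ column parity checks and $C_{\mathrm{row}} \subseteq \F^n$ by the $b$ row checks; hence $\dim C_{\mathrm{col}} \geq m-a$ and $\dim C_{\mathrm{row}} \geq n-b$, regardless of whether those parity checks are independent (this is the only property of the instantiation I will use). By non-regularity pick $U \subseteq [m]$, $V \subseteq [n]$ with $|U| = u \geq a$, $|V| = v \geq b$ satisfying
\[
|E \cap (U\times V)| \ >\ va + ub - ab \ =\ uv - (u-a)(v-b),
\]
equivalently $|(U\times V)\setminus E| < (u-a)(v-b)$; since the left-hand side is at most $uv$, this forces $u > a$ and $v > b$.

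The main step is a dimension count for the subspace
\[
W \ :=\ \{X \in C : \supp(X) \subseteq U\times V\}.
\]
Writing $C_{\mathrm{col}}^U := \{w \in C_{\mathrm{col}} : \supp(w) \subseteq U\}$ and analogously $C_{\mathrm{row}}^V$, I will identify $W = C_{\mathrm{col}}^U \otimes C_{\mathrm{row}}^V$. Indeed, if $X \in C$ has support in $U\times V$ then the rows indexed by $[m]\setminus U$ and the columns indexed by $[n]\setminus V$ vanish; each row $i \in U$ is a codeword of $C_{\mathrm{row}}$ with support in $V$, i.e.\ lies in $C_{\mathrm{row}}^V$; each column $j \in V$ lies in $C_{\mathrm{col}}^U$; and the reverse inclusion is immediate. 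A direct constraint count then gives $\dim C_{\mathrm{col}}^U \geq u - a$, because $C_{\mathrm{col}}^U$ is cut out in $\F^m$ by the $a$ column equations together with the $m-u$ vanishing conditions $w_i = 0$ for $i \notin U$; similarly $\dim C_{\mathrm{row}}^V \geq v - b$. Using the product formula for tensor codes,
\[
\dim W \ =\ \dim C_{\mathrm{col}}^U \cdot \dim C_{\mathrm{row}}^V \ \geq\ (u-a)(v-b).
\]

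To finish I further require $X \in W$ to vanish on the coordinates in $(U\times V)\setminus E$. This imposes $|(U\times V)\setminus E| < (u-a)(v-b)$ linear conditions on a space of dimension at least $(u-a)(v-b)$, so a nonzero $X \in W$ exists with $\supp(X) \subseteq E \cap (U\times V) \subseteq E$. Such an $X$ is a nonzero codeword of the arbitrary instantiation $C$ supported in $E$, proving $E$ is not correctable. The only non-routine point is the tensor-product identification of $W$, which is a short verification; everything else is dimension bookkeeping, so I do not expect a serious obstacle.
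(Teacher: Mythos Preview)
Your argument is essentially the paper's own: both exhibit, for an arbitrary instantiation, a nonzero codeword supported in $E$ by observing that the codewords supported in $U\times V$ have dimension at least $(u-a)(v-b)$ (the paper phrases this dually as ``the rank of the row and column constraints on $U\times V$ is at most $va+ub-ab$'') and then imposing the fewer than $(u-a)(v-b)$ additional vanishing conditions on $(U\times V)\setminus E$. Your explicit identification $W=C_{\mathrm{col}}^U\otimes C_{\mathrm{row}}^V$ simply spells out what the paper leaves to a one-line citation of the tensor-product dimension formula.
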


We conjecture that regularity is also sufficient, and thus yields a characterization of the correctable error patterns in $T_{m \times n}(a,b,0)$.

\begin{conjecture}\label{Conj:ClassMain}
An erasure pattern $E$ is correctable for $T_{m\times n}(a,b,0)$ if and only if it is regular.
\end{conjecture}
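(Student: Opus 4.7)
The necessity direction is Lemma~\ref{Lemma:RegNecess}, so the plan is to establish the sufficiency direction: if $E \subseteq [m]\times[n]$ is regular, then there exist MDS codes $C_{\mathrm{col}} \subseteq \F^m$ and $C_{\mathrm{row}} \subseteq \F^n$, over a sufficiently large field, such that $C_{\mathrm{col}} \otimes C_{\mathrm{row}}$ has no nonzero codeword supported in $E$. Since an MR instantiation of $T_{m\times n}(a,b,0)$ corrects every correctable pattern, this gives the full characterization.

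First I would recast the goal in terms of support patterns. Any nonzero codeword $M$ of $C_{\mathrm{col}}\otimes C_{\mathrm{row}}$ with $\supp(M)\subseteq E$ has row support $I$ with $|I|\geq a+1$ and column support $J$ with $|J|\geq b+1$, and it lies in
\[
V_{I,J} \;:=\; (C_{\mathrm{col}}\cap \F^I)\otimes (C_{\mathrm{row}}\cap \F^J),
\]
which for MDS $C_{\mathrm{col}}, C_{\mathrm{row}}$ has dimension $(|I|-a)(|J|-b)$. The support constraint says $M$ lies in the kernel of the coordinate projection $\pi_{I,J}\colon V_{I,J}\to \F^{(I\times J)\setminus E}$. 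Regularity of $E$ applied to $(U,V)=(I,J)$ yields $|(I\times J)\setminus E|\geq (|I|-a)(|J|-b)=\dim V_{I,J}$, so $\pi_{I,J}$ is at least combinatorially allowed to be injective. Simultaneous injectivity of $\pi_{I,J}$ over all finitely many candidate supports $(I,J)$ is exactly the statement that $C_{\mathrm{col}}\otimes C_{\mathrm{row}}$ has no nonzero codeword supported in $E$.

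Injectivity of $\pi_{I,J}$ is the non-vanishing of a polynomial in $\{\alpha_i^{(k)},\beta_j^{(k)}\}$ (a suitable maximal minor), hence a Zariski-open condition. It therefore suffices to exhibit, for each admissible $(I,J)$, one MDS specialization achieving injectivity; over a large enough field, a common choice of $C_{\mathrm{col}}, C_{\mathrm{row}}$ will simultaneously avoid the finitely many exceptional subvarieties. I would specialize to a Reed--Solomon form $\alpha_i^{(k)}=x_i^{k-1}$, $\beta_j^{(k)}=y_j^{k-1}$ with the $x_i, y_j$ treated as indeterminates, so that $C_{\mathrm{col}}\cap \F^I$ and $C_{\mathrm{row}}\cap \F^J$ are kernels of Vandermonde matrices. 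The maximal minor of $\pi_{I,J}$ then becomes an explicit polynomial in the $x_i, y_j$, and I would try to extract a surviving monomial via a combinatorial matching of erasures in $E\cap (I\times J)$ with row and column parity checks, using a matroid-union or Lindström--Gessel--Viennot style argument; the regularity inequality is meant to supply precisely the Hall-type condition for such a matching to exist.

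The main obstacle, and the true core of Conjecture~\ref{Conj:ClassMain}, is producing this combinatorial witness in full generality. When $a=1$ (the case treated in Section~\ref{Sec:Class}), each $C_{\mathrm{col}}\cap \F^I$ is one-dimensional, so $V_{I,J}$ collapses to a Vandermonde-type matrix and the required non-vanishing reduces to a single transversal argument controlled directly by regularity. For $a\geq 2$ the space $V_{I,J}$ is a genuine tensor of higher-dimensional subspace-restricted MDS codes, and the determinant of $\pi_{I,J}$ becomes a high-degree polynomial with potentially delicate cancellations between terms indexed by different pairings of parities with erasures; isolating a monomial whose survival is governed by the exact regularity inequality $|E\cap(I\times J)|\leq |J|a+|I|b-ab$, rather than a strictly stronger condition, is where the approach for $a=1$ does not transparently extend and where a new combinatorial or algebraic ingredient appears to be needed.
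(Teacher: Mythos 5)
The statement you are addressing is stated in the paper as a \emph{conjecture}: the paper itself proves it only in the case $a=1$ (Theorem~\ref{Th:Class}, Section~\ref{Sec:Class}) and explicitly leaves the general case open. Your setup is sound and matches the right way to think about the problem: correctability of $E$ is equivalent to the existence of MDS codes $C_{\mathrm{col}},C_{\mathrm{row}}$ whose tensor product has no nonzero codeword supported in $E$; any such codeword with row support $I$ and column support $J$ lies in $V_{I,J}=(C_{\mathrm{col}}\cap\F^I)\otimes(C_{\mathrm{row}}\cap\F^J)$ of dimension $(|I|-a)(|J|-b)$; and the regularity inequality applied to $(I,J)$ is exactly the dimension count $|(I\times J)\setminus E|\geq\dim V_{I,J}$ needed for the projection $\pi_{I,J}$ to have a chance of being injective. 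But this is a reformulation, not a proof. The entire content of the conjecture is the step you defer: showing that for every admissible $(I,J)$ the relevant maximal minor of $\pi_{I,J}$ is not identically zero as a polynomial in the code coefficients, i.e.\ producing the combinatorial witness (matching/transversal) whose existence is supposed to follow from regularity. You acknowledge at the end that this is where the argument stops for $a\geq 2$, so the proposal does not establish the statement; it is an (accurate) identification of the obstruction, which coincides with why the problem is open. One local inaccuracy: for $a=1$ the shortened code $C_{\mathrm{col}}\cap\F^I$ has dimension $|I|-1$, not $1$, so $V_{I,J}$ does not ``collapse'' in the way you describe, and even the $a=1$ instance of your plan (a single transversal argument via Reed--Solomon coefficients and an LGV-type cancellation analysis) is asserted rather than carried out.

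For comparison, the paper's proof of the $a=1$ case takes a different and more hands-on route than generic Reed--Solomon specialization. It first reduces to patterns where every nonempty row has exactly $b+1$ erasures (Lemmas~\ref{Lemma:OmitLow} and~\ref{Lemma:BoostingReg}, the latter via a ``boosting'' construction that splits heavy rows), fixes $C_{\mathrm{col}}$ to be the simple parity code, and then \emph{tailors $C_{\mathrm{row}}$ to the pattern $E$}: it spans the row code by random vectors $\mathbf{w}_i$ whose supports are constrained to the rows of $E$ together with generic full-support fillers, and derives a contradiction with regularity via Hall's theorem when a codeword supported on $E$ exists. Tailoring the code to the pattern is legitimate here because correctability only requires \emph{some} instantiation to work, and it sidesteps the need to verify simultaneous non-vanishing of all the minors your approach requires; on the other hand, your generic framework, if the non-vanishing could be established, would more directly yield a single MR instantiation correcting all regular patterns at once, which is a stronger (and also open) goal.
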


We prove Conjecture~\ref{Conj:ClassMain} in the restricted case of
$a=1$. This topology of a row code tensored with a parity check code is important in practice \cite{MLRH14}.

\begin{theorem}\label{Th:Class}
A pattern $E$ is correctable for $T_{m\times n}(1,b,0)$ if and only if it is regular for $T_{m\times n}(1,b,0)$.
\end{theorem}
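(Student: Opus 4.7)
The $\Leftarrow$ direction is Lemma~\ref{Lemma:RegNecess}, so the task is to show every regular $E$ is correctable. My plan is to exhibit a concrete instantiation of $T_{m\times n}(1,b,0)$ that corrects $E$. Working in characteristic~$2$, I take $\alpha_i = 1$ for every $i$, so $C_{\mathrm{col}}$ is the plain parity-check code. Then a codeword of $C = C_{\mathrm{col}} \otimes C_{\mathrm{row}}$ is an $m$-tuple $(c_1, \dots, c_m) \in C_{\mathrm{row}}^m$ with $\sum_i c_i = 0$, and $\supp(c) \subseteq E$ iff each $c_i$ lies in the shortened subspace $C_{\mathrm{row}}^{E_i} := \{v \in C_{\mathrm{row}} : \supp(v) \subseteq E_i\}$, where $E_i := \{j : (i,j) \in E\}$. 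So correctability of $E$ is equivalent to the subspaces $\{C_{\mathrm{row}}^{E_i}\}_{i=1}^m$ forming a direct sum in $C_{\mathrm{row}}$.

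Since $C_{\mathrm{row}}$ is MDS, $\dim C_{\mathrm{row}}^{E_i} = \max(0, r_i - b)$ with $r_i := |E_i|$, and applying regularity with $U = \{i : r_i > b\}$ and $V = [n]$ yields $\sum_i \dim C_{\mathrm{row}}^{E_i} \leq n - b = \dim C_{\mathrm{row}}$, the necessary dimension count for direct-sum-ness. To obtain sufficiency I take $C_{\mathrm{row}}$ to be a Reed--Solomon code over a large characteristic-$2$ field with generically chosen evaluation points $t_1, \dots, t_n$. Then $C_{\mathrm{row}}^{E_i}$ has the explicit parameterization $\{(p(t_j) H_i(t_j))_{j \in [n]} : \deg p < r_i - b\}$ with $H_i(x) := \prod_{j \notin E_i}(x - t_j)$, and the direct-sum condition becomes $\F$-linear independence of the polynomials $\{H_i(x)\,x^k : i \in [m],\, 0 \leq k < r_i - b\}$ in $\F[x]$.

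I would prove this independence by induction on $m + n$, via three cases. (a) If some row $i_0$ has $r_{i_0} \leq b$, then $C_{\mathrm{row}}^{E_{i_0}} = 0$: drop row $i_0$ and apply induction to $E \setminus (\{i_0\} \times [n])$ on $T_{(m-1)\times n}(1,b,0)$, which inherits regularity. (b) If some column $j_0$ contains a unique erasure $(i_0, j_0)$, then evaluating any syzygy $\sum_i g_i H_i = 0$ at $x = t_{j_0}$ forces $g_{i_0}(t_{j_0}) = 0$ (since $H_i(t_{j_0}) = 0$ for $i \neq i_0$), so $(x - t_{j_0}) \mid g_{i_0}$; cancelling the common factor $(x - t_{j_0})$ from the relation produces a regular instance on $[m] \times ([n] \setminus \{j_0\})$ with $E' = E \setminus \{(i_0, j_0)\}$. (c) Otherwise every row has $r_i \geq b+1$ and every nonempty column has $c_j \geq 2$---the \emph{irreducible} case.

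The main obstacle is case (c), where iterative reduction fails and a global argument is required. My plan is to produce a nonvanishing $D \times D$ minor of the polynomial matrix $M_{(i,k),j} = H_i(t_j)\,t_j^k$, where $D := \sum_i (r_i - b)$. Hall's theorem applied to the bipartite support graph, combined with the tight regularity bound $|E \cap U \times V| \leq v + (u-1)b$, yields a perfect matching between the $D$ rows $(i,k)$ of $M$ and a set $J$ of $D$ columns with $j \in E_i$ on matched edges. The delicate step is to leverage the Vandermonde factorization $H_i(t_j) = \prod_{j' \notin E_i}(t_j - t_{j'})$ together with the tightness of regularity to show that the monomial in $t_1, \dots, t_n$ contributed by this matching is not cancelled modulo~$2$ by competing matchings in the determinant expansion, yielding a nonzero polynomial that can be made nonzero by a generic specialization over a sufficiently large field.
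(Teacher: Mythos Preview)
Your reduction of correctability to the direct-sum condition on the shortened subspaces $C_{\mathrm{row}}^{E_i}$ is clean and correct, and cases (a) and (b) are valid inductive reductions (a small point: you should also dispose of columns with \emph{zero} erasures before entering case (c), but that is the trivial sub-case of (b)). Also, Lemma~\ref{Lemma:RegNecess} gives the implication correctable $\Rightarrow$ regular, so it is the $\Rightarrow$ direction, not the $\Leftarrow$ as you write; this is cosmetic.

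The substantive gap is case (c). You correctly identify it as the main obstacle, but the plan you sketch is not a proof. Hall's theorem plus regularity does yield a system of distinct representatives matching the $D$ meta-rows $(i,k)$ into columns of $E$, and hence a single nonzero \emph{product} in the Leibniz expansion of your $D\times D$ minor. But each such product is itself a polynomial of large degree in $t_1,\dots,t_n$ (since $H_i(t_j)=\prod_{j'\notin E_i}(t_j-t_{j'})$), not a monomial, and in characteristic~$2$ you cannot appeal to signs to isolate a leading term. Different matchings contribute overlapping monomials, and you give no mechanism --- a monomial order, a weight specialization, a leading-term argument --- that singles out an uncancelled one. Invoking ``tightness of regularity'' is too vague: in case~(c) regularity need not be tight on every sub-grid, and you do not say how tightness on any particular sub-grid would control the cancellation. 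As it stands, the entire content of the theorem is packed into this unexecuted step.

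The paper sidesteps exactly this difficulty by \emph{not} using Reed--Solomon. It first reduces (Lemma~\ref{Lemma:OmitLow}, matching your case~(a)) to rows of weight $\geq b+1$, then uses a \emph{boosting} operation (Definition~\ref{Def:Boosting}, Lemma~\ref{Lemma:BoostingReg}) that splits a row of weight $b+r$ into $r$ rows of weight exactly $b+1$, preserving regularity. In that uniform case (Lemma~\ref{Lemma:BplusOne}) the shortened subspaces are one-dimensional by construction: the row code is built randomly with one generator $\mathbf{w}_i$ supported on each $E_i$ (plus padding vectors), so direct-sum-ness is just linear independence of the $\mathbf{w}_i$, and Hall's theorem applied to the support graph delivers that directly. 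The tailored random code makes the Hall-matching argument actually land, whereas for Reed--Solomon you would still owe the determinant computation. If you can carry out your case~(c), you would obtain the pleasant bonus that a \emph{generic Reed--Solomon} row code already suffices --- something the paper does not claim --- but that is precisely the part you have not done.
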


\begin{figure}
\centering
\includegraphics[width=0.3\textwidth]{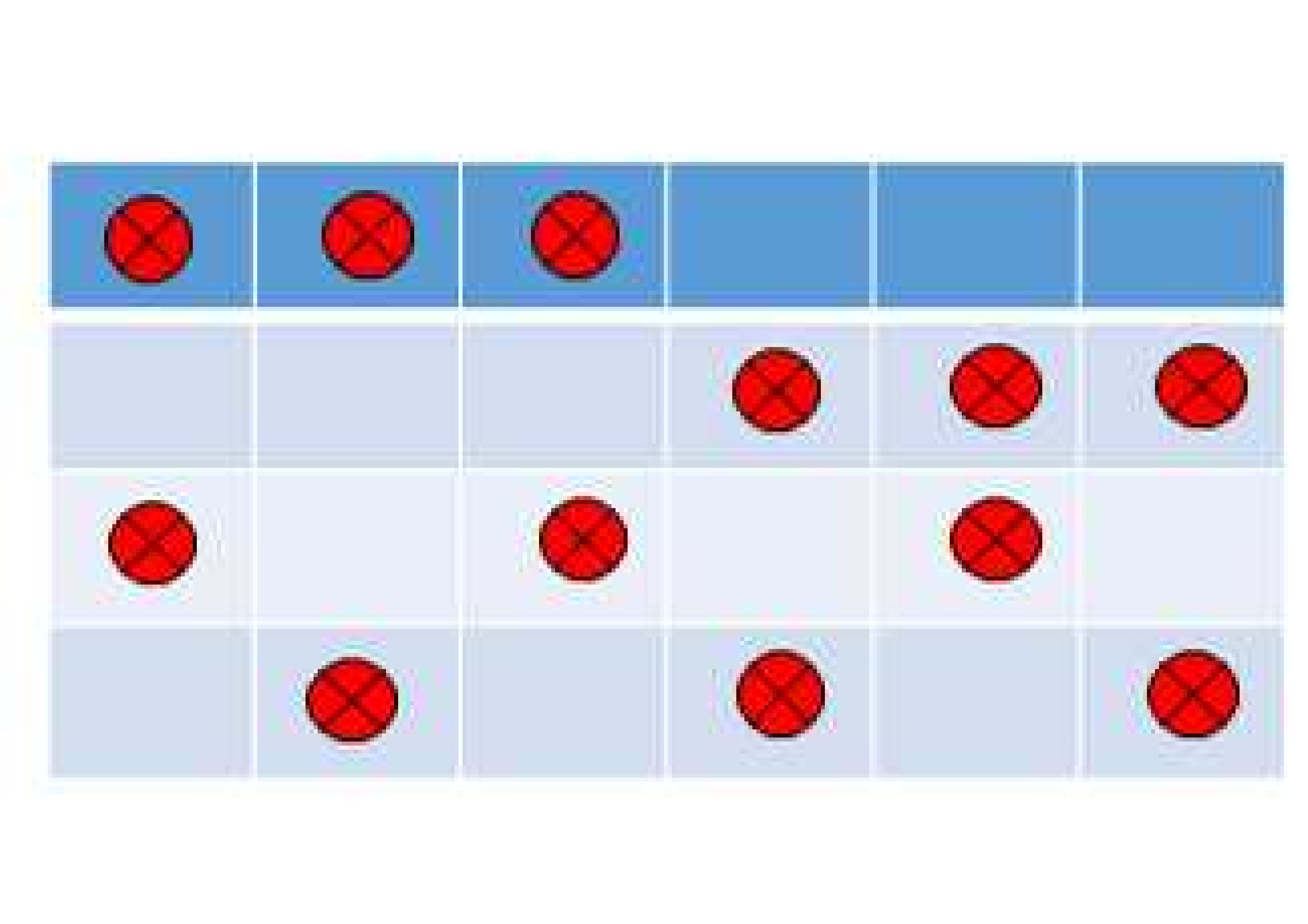}
\caption{Consider the tensor of a $[6,4]$ row code and a $[4,3]$
  column code. This $12$-failure pattern is irreducible, since every row has 3
  erasures and every column has $2$ erasures. By Theorem \ref{Th:Class} it can be
  corrected by an MR tensor product code.}
\label{fig:eg}
\end{figure}

For the failure pattern in Figure \ref{fig:eg} it can be verified that there
exist $[6,4,3]$ MDS codes whose tensor with the $[4,3,2]$ parity code will not correct this error pattern.
The illustrates that the set of correctable failure patterns
for a tensor product code depends on the choice of row and column
codes (in contrast to the row-column decoder). Each of them being MDS
is necessary but not sufficient for maximal recoverability. Finding
explicit maximally recoverable tensor products is an intriguing open
problem, and Conjecture \ref{Conj:ClassMain} might be a good starting point towards it.

\subsection{Asymptotically optimal MR codes for $T_{m\times n}(1,0,2)$}

The topology $T_{m\times n}(1,0,h)$ has received a considerable amount of attention, especially in the recent work on LRCs~\cite{BHH,GHJY,GHSY,BK,LL,HY,BPSY}. For $h=1$ explicit MR codes exist over a field of size $O(m)$ (which is sub-linear in the input length). For $h\geq 2$ there is an $\Omega(mn)$ lower bound on the field size~\cite{GHJY}. In the case of $h=2,$ there is a matching upper bound~\cite[Theorem 2.2]{Blaum}. In what follows we present a new code family re-proving the $O(mn)$ upper bound for $T_{m\times n}(1,0,2).$

\begin{theorem}\label{Th:H2}
For all $m,n$ there exists an explicit maximally recoverable code instantiating the topology $T_{m\times n}(1,0,2)$ over a field of size $O(mn).$
\end{theorem}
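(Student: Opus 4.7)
The plan is to reformulate the MR condition geometrically and then produce an explicit point configuration on a family of hyperbolas. Let $V = \{z \in \F_q^{mn} : \sum_i z_{ij}=0 \text{ for all } j\}$ be the column-parity subspace, and for an erasure set $E$ put $V_E = V \cap \{z : \supp(z)\subseteq E\}$. Writing $E_j = E \cap ([m]\times\{j\})$, a direct calculation gives $\dim V_E = \sum_j \max(|E_j|-1, 0)$, and the pattern is correctable by an instantiation with global parities $h_1,h_2$ iff $(h_1,h_2):V_E \to \F_q^2$ is injective; this forces $\dim V_E \leq 2$. The nontrivial correctable cases are thus (II) one column with $|E_j|=2$, (III) one column with $|E_j|=3$, and (IV) two distinct columns each with $|E_j|=2$. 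Attaching to each coordinate the point $p_{j,i} = (h_1[i,j],\, h_2[i,j]) \in \F_q^2$, the MR condition becomes two geometric requirements: (a) in each column $j$ the $m$ points $p_{j,1},\ldots,p_{j,m}$ are in general position (no three collinear), and (b) across distinct columns the direction sets
\[ \pi_j \;=\; \bigl\{\,[p_{j,i_1}-p_{j,i_2}] \in \mathbb{P}^1(\F_q) : i_1\neq i_2 \in [m]\,\bigr\} \]
are pairwise disjoint. Since $|\mathbb{P}^1(\F_q)|=q+1$, achieving $q=O(mn)$ requires each $\pi_j$ to have size $O(m)$ and to pack tightly.

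The construction I would propose is the following. Take $q$ the smallest power of $2$ with $q \geq n(2m-1)+1$, let $\omega$ be a primitive element of $\F_q$, set $c_j = \omega^{(j-1)(2m-1)}$ for $j\in[n]$, and place $p_{j,i}=(\omega^i,\, c_j\omega^{-i})$ on the hyperbola $xy=c_j$; equivalently, $h_1[i,j]=\omega^i$ and $h_2[i,j]=c_j\omega^{-i}$. Since a line meets a conic in at most two points, no three points of a column are collinear, handling (II) and (III). For (IV), in characteristic $2$ one computes
\[ p_{j,i_1}+p_{j,i_2} \;=\; (\omega^{i_1}+\omega^{i_2})\bigl(1,\; c_j\,\omega^{-(i_1+i_2)}\bigr), \]
so the projective class of this direction is determined by the exponent $(j-1)(2m-1)-(i_1+i_2) \pmod{q-1}$. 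As $(i_1,i_2)$ ranges over pairs with $i_1\neq i_2 \in [m]$, the sum $i_1+i_2$ takes $2m-3$ consecutive integer values, so $\pi_j$ is a single contiguous block of $2m-3$ exponents of $\omega$. Stepping $j$ by one shifts the block by $2m-1$ in $\mathbb{Z}/(q-1)$, and the choice of $q$ ensures the $n$ blocks are pairwise disjoint, yielding (IV).

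The main obstacle is arranging (IV) within a budget of only $q+1$ projective points: any instantiation must produce signature sets $\pi_j$ that are both small and pack disjointly into $\mathbb{P}^1(\F_q)$. The hyperbola choice is what makes the packing one-dimensional---each $\pi_j$ becomes a contiguous block of $2m-3$ exponents of the generator $\omega$, and the shift structure $c_j = \omega^{(j-1)(2m-1)}$ reduces disjointness to an interval-packing argument that fits exactly when $q\gtrsim 2mn$, matching the $\Omega(mn)$ lower bound of~\cite{GHJY}. The remaining verifications---distinctness of the $c_j$, nontriviality of $(h_1,h_2)$ on $V$, and Cases (II)--(III)---all follow once $\omega$ has order exceeding $m$, which the chosen field size already guarantees.
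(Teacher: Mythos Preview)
Your proof is correct and takes a genuinely different route from the paper's. The paper's construction is purely additive: it fixes an additive subgroup $G \subseteq \F_{MN}$ of size $M \geq m$, picks $s_1,\dots,s_m \in G$ and coset representatives $c_1,\dots,c_n$, and sets $\gamma_{ij}^{(1)}=s_i$, $\gamma_{ij}^{(2)}=s_i^2+c_j s_i$; the two nontrivial cases are then dispatched by a Vandermonde determinant and a $4\times4$ determinant that factors as $(s_{i_1}+s_{i_2})(s_{i_3}+s_{i_4})(s_{i_1}+s_{i_2}+s_{i_3}+s_{i_4}+c_{j_1}+c_{j_2})$, the last factor being nonzero precisely because the $s_i$ lie in $G$ and the $c_j$ lie in distinct cosets. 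Your construction is instead multiplicative and geometric: you recast MR as ``points in general position in each column, with pairwise disjoint direction sets across columns,'' and realize this by placing each column on the hyperbola $xy=c_j$. What the paper's approach buys is a very short determinant calculation with no geometry needed. What your approach buys is a conceptual explanation of the $O(mn)$ bound: the direction set of a column collapses to a contiguous block of $O(m)$ exponents in $\mathbb{P}^1(\F_q)$, so the problem becomes packing $n$ such blocks, which is tight against the $\Omega(mn)$ lower bound. Both achieve field size below $4mn$; your framework may be more suggestive for attacking $h\geq 3$ (replace conics by higher-degree curves), whereas the paper's additive trick is more specific to $h=2$.
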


\section{Super-polynomial field-size lower bounds}
\label{Sec:lb}

The proof of  Theorem~\ref{Thm:LBgen} combines two main ingredients: the proof
of Lemma~\ref{Lemma:Main}, and a characterization of correctable erasure patterns in $T_{n \times n}(1,1,h)$.
The lower bound for $T_{n \times n}(1,1,1)$ is a fairly direct application Lemma \ref{Lemma:Main}, while lower bounds for $T_{m \times
  n}(a,b,h)$ where $a, b, h \geq 1$ are derived by a reduction to $T_{n \times
  n}(1,1,1)$.

\subsection{The Main Lemma}
\label{Sec:LemmaMain}
In order to prove Lemma \ref{Lemma:Main}, we will consider the more general setting of bounded degree graphs.
We will consider a graph $G=(V,E)$ with maximum degree $D$, where each edge $e\in E$ is assigned a weight $\gamma(e)\in\F$, where $\F$ is a field of characteristic $2.$ Our result will actually apply to any Abelian group. For path $p$ in the graph, we use $\gamma(p)$ to denote the sum of the weights of all edges in the path. Let $P(v_1,v_2)$ ($v_1\neq v_2\in V$) be the set of simple paths from $v_1$ to $v_2$, and $P_k(v_1,v_2)$ be the set of simple paths from $v_1$ to $v_2$ with length $k$, where the {\em length} of a path is the number of edges in that path. For a path in $P_k(v_1,v_2)$, we say that $v_1$ is the 1st vertex, $v_2$ is the $(k+1)$-th vertex, and the other $(k-1)$ vertices on the path are the 2nd through the $k$-th vertices according to their positions. We are interested in graphs with the following property:

\begin{Def} A weighted graph as above satisfies {\bf Property $\mathcal{A}$} if for all $v_1\neq v_2\in V$ and vertex disjoint simple paths $p_1\neq p_2\in P(v_1,v_2)$, their weights satisfy the condition $\gamma(p_1)\neq \gamma(p_2)$.
\end{Def}

It is clear that if we assign weights $\gamma$ to the edges of $K_{m, n}$ such that Equation \eqref{eq:correctable} holds, then Property $\mc{A}$ holds. We next state the main technical lemma of this section, which shows that there cannot be too many paths with the same length and the same weight.
\begin{Lem} \label{lem:main}
If $G$ has Property~$\mathcal{A}$, then for arbitrary vertices $v_1\neq v_2\in V$, positive integer $k\leq\sqrt{D}$, and $\gamma_0\in\Sigma$, the set
$$S=\{p\in P_k(v_1,v_2)\mid \gamma(p)=\gamma_0\}$$
has cardinality at most $k^{\log_2k+1}D^{k-\log_2k-1}$.
\end{Lem}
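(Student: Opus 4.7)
The plan is to prove the bound by strong induction on $k$, with Property~$\mathcal{A}$ supplying a structural intersection constraint that is amplified by two successive pigeonholes. Let $f(k)$ denote the worst-case value of $|S|$ over all valid choices of $v_1 \neq v_2$, $\gamma_0$, and weighted graphs of maximum degree $D$ satisfying Property~$\mathcal{A}$. The structural starting point is that any two distinct paths $p_1, p_2 \in S$ must share an internal vertex: otherwise they would form vertex-disjoint simple paths from $v_1$ to $v_2$ of identical weight $\gamma_0$, contradicting Property~$\mathcal{A}$.

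With pairwise intersection in hand, I would run a two-stage pigeonhole. For each internal vertex $u$, set $S_u = \{p \in S : u \in V(p) \setminus \{v_1, v_2\}\}$. The incidence identity $\sum_u |S_u| = (k-1)|S|$, combined with the pairwise-intersection bound $\sum_u |S_u|^2 \geq |S|(|S|+k-2)$ and the inequality $\sum_u |S_u|^2 \leq (\max_u |S_u|)(k-1)|S|$, produces an internal vertex $u$ with $|S_u| \geq (|S|+k-2)/(k-1)$. A second pigeonhole over the position $i \in \{2,\ldots,k\}$ at which $u$ appears on paths in $S_u$ then yields a pair $(i,u)$ and a subcollection $S_{i,u}$ of paths whose $i$-th vertex is $u$, with $|S_{i,u}|$ at least $(|S|+k-2)/(k-1)^2$.

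Each path in $S_{i,u}$ decomposes uniquely as a simple first half from $v_1$ to $u$ of length $i-1$ concatenated with a simple second half from $u$ to $v_2$ of length $k-i+1$, their weights summing to $\gamma_0$. Let $m = \min(i-1, k-i+1) \leq \lfloor k/2 \rfloor$ be the length of the shorter half. A standard max-degree argument gives at most $D^{m-1}$ simple paths of length $m$ between two fixed vertices, and for each short half of weight $\gamma'$ the inductive hypothesis bounds the number of compatible long halves (simple, fixed endpoints, length $k-m$, weight $\gamma_0-\gamma'$) by $f(k-m)$. This produces the key recursion
\[
f(k) \leq (k-1)^2 \cdot D^{m-1} \cdot f(k-m) \quad \text{for some } m \in \{1,\ldots,\lfloor k/2 \rfloor\}.
\]

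The principal obstacle is the final unwinding: since the adversary effectively picks $m$, the inductive step must be verified uniformly over all $m \leq \lfloor k/2 \rfloor$. Substituting the target ansatz $C_\ell := \ell^{\log_2 \ell + 1} D^{\ell - \log_2 \ell - 1}$, a direct computation shows that $\log_2\bigl((k-1)^2 D^{m-1} C_{k-m}\bigr)$ is a (nearly) convex function of $\log_2(k-m)$, so its maximum over $m \in \{1,\ldots,\lfloor k/2 \rfloor\}$ is attained at one of the endpoints $m=1$ or $m=\lfloor k/2 \rfloor$. Both endpoints are bounded by $C_k$ precisely when $D \geq k^2$---which is exactly the hypothesis $k \leq \sqrt{D}$. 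The case $m = \lfloor k/2 \rfloor$ saturates the recursion up to the $(k-1)^2$ factor, confirming both that $C_\ell$ is the right form for the bound and that the $\sqrt{D}$ threshold is intrinsic to this strategy.
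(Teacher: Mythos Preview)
Your proof is correct and follows essentially the same approach as the paper: induction on $k$, using Property~$\mathcal{A}$ to force paths in $S$ to share an internal vertex, pigeonholing to find a common vertex and position shared by at least $\approx |S|/(k-1)^2$ paths, then splitting at that vertex and applying the inductive bound to the longer half while counting the shorter half crudely by $D^{m-1}$. The paper differs only cosmetically---it fixes a single reference path $p_0$ and pigeonholes in one shot over pairs $(i,j)\in[k-1]^2$ (position in $p$, position in $p_0$) rather than your two-stage second-moment argument, and it verifies the recursion directly for \emph{every} $t=k-m\in[k/2,k-1]$ via the closed-form ratio $(kt/D)^{\log_2(2t/k)}\le 1$, which makes your convexity-and-endpoints step unnecessary.
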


\noindent {\bf High-level idea:} We think $k$ as a small number for convenience. The goal of the lemma is to show $|S|\lesssim D^{k-\log_2k-1}$. The total number of paths in $|S|$ would be $D^{k-1}$ if all the intermediate $k-1$ vertices could be chosen `freely'. The lemma is saying that we do not have so much `freedom'.

We will show that there is a set $T\subseteq S$ with $|T|\gtrsim|S|/k^2$ such that all paths in $T$ share the same $t$-th vertex for some $t\in[2,k]$. That is saying many paths in $S$ are fixed at the $t$-th vertex, and the choice of the $t$-th vertex is not `free'. Then we fix the prefix before (or suffix after) the $t$-th vertex and recursively apply the argument to the remaining half of the path. Intuitively,  we can do this for $\log_2k$ rounds (since each time we halve the length of the path) and find $\log_2k$ vertices that are not `free', which gives the bound $D^{k-\log_2k-1}$.

The proof is by induction on the length $k$.

\begin{proof}
Let $f(k)=k^{\log_2k+1}D^{k-\log_2k-1}$. We prove by induction on $k$. For $k=1$, we have $|S|\leq1=k^{\log_2k+1}D^{k-\log_2k-1}=f(k)$. Assume we have proved the lemma for lengths up to $k-1$, and we consider the case of $k$ ($2\leq k\leq\sqrt{D}$).

If $S=\emptyset$, the lemma is trivial. We only consider the case that $S\neq\emptyset$. We pick an arbitrary path $p_0\in S$. Then for any other path $p\in S$, $p\neq p_0$, $p$ must intersect $p_0$ at some vertex other than $v_1,v_2$, because of Property~$\mathcal{A}$. That is, there exists $i,j\in[k-1]$ such that the $(i+1)$-th vertex of $p$ is the same as the $(j+1)$-th vertex of $p_0$. Let $T_{ij}$ denote the set of these paths, formally
$$T_{ij} = \{p\in S\setminus\{p_0\}\mid\text{the $(i+1)$-th vertex of $p$ is the $(j+1)$-th vertex of $p_0$}\}.$$
Note that $\bigcup_{i,j\in[k-1]}T_{ij}=S\setminus\{p_0\}$. By the Pigeonhole principle, there must exist $i_0,j_0\in[k-1]$ such that
$$|T_{i_0j_0}|\geq\frac{|S|-1}{(k-1)^2}.$$

We consider the paths in $T_{i_0j_0}$. These paths share the same $(i_0+1)$-th vertex. We denote this vertex by $v_3$. Every path in $T_{i_0j_0}$ can be considered as two parts, the head from $v_1$ to $v_3$ (with length $i_0$) and the tail from $v_3$ to $v_2$ (with length $k-i_0$). We will assume that $i _0 \leq k/2$, so that the head not longer than the tail. If this condition does not hold, we can interchange the definition of head and tail.

\begin{figure}[H]
\centering
\includegraphics{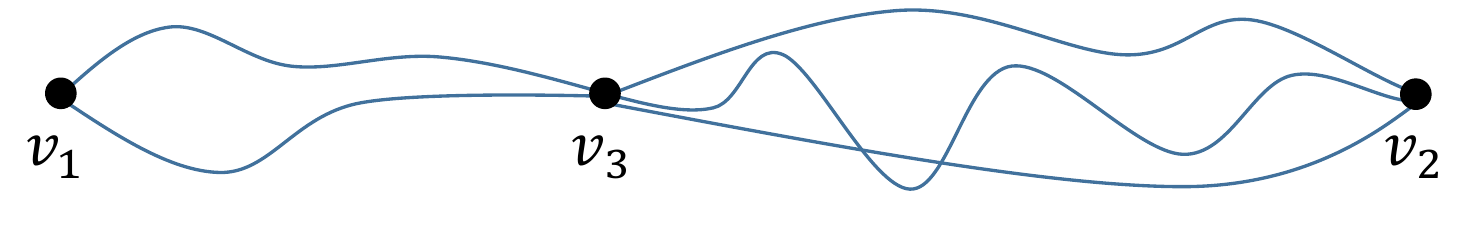}
\caption{Paths in $T_{i_0j_0}$ are fixed at 3 vertices $v_1,v_2,v_3$.}
\end{figure}

The number of possible heads equals the number of simple paths from $v_1$ to $v_3$, which  is at most $D^{i_0-1}$.
We count the paths in $T_{i_0j_0}$ according to their head. For every choice of head, the weight of the tail is fixed because all paths in $T_{i_0j_0}$ have the same total weight. Hence by induction hypothesis, the number of possibilities of the tail for every fixed head is bounded by
\[
f(k-i_0)=(k-i_0)^{\log_2(k-i_0)+1}D^{k-i_0-\log_2(k-i_0)-1}.
\]
So we have
\begin{align*}
|T_{i_0j_0}| &\leq D^{i_0-1}f(k-i_0)\\
& =(k-i_0)^{\log_2(k-i_0)+1}D^{k-\log_2(k-i_0)-2}
\end{align*}
Hence
\begin{align}
|S| & \leq(k-1)^2|T_{i_0j_0}|+1 \nonumber\\
& \leq(k-1)^2(k-i_0)^{\log_2(k-i_0)+1}D^{k-\log_2(k-i_0)-2}+1 \nonumber \\
&\leq k^2(k-i_0)^{\log_2(k-i_0)+1}D^{k-\log_2(k-i_0)-2}. \label{eq1}
\end{align}
Let $t = k - i_0$. Since we assume that $i_0 \leq k/2$, we have $k/2 \leq t \leq k$ and
\[ |S| \leq k^2t^{\log_2(t) +1}D^{k - \log_2(t) -2}.\]
We show that the RHS is at most $f(k)$, by considering the ratio with $f(k)$,
\begin{align*}
\frac{k^2t^{\log_2t+1}D^{k-\log_2t-2}}{k^{\log_2k+1}D^{k-\log_2k-1}}
&= \frac{ t^{\log_2t+1}}{ k^{\log_2k -1}D^{\log_2t-\log_2k +1}}\\
&= \frac{(kt)^{\log_2t-\log_2k+1}}{D^{\log_2t-\log_2k+1} }\ \ \text{since} \ \ t^{\log_2(k)} = k^{\log_2(t)}\\
&= \left(\frac{kt}{D}\right)^{\log_2(2t/k)}\\
& \leq1
\end{align*}
where in the last step we used the fact $t\leq k\leq\sqrt{D}$ and $t\geq k/2$. Thus $|S|\leq f(k)$, hence the claim is proved.
\end{proof}

We proceed to the proof of the main Lemma.

\begin{proofof}{of Lemma~\ref{Lemma:Main}}
We claim that $K_{n,n}$ with weights $\{\gamma(e)\}$ has Property~$\mathcal{A}.$ For $v_1\neq v_2$ and vertex disjoint simple paths $p_1\neq p_2\in P(v_1,v_2)$, we can see that $p_1$ and $p_2$ form a simple cycle. Hence $\gamma(p_1)+\gamma(p_2)\neq 0.$ Since the alphabet of weights has characteristic $2,$ we have $\gamma(p_1)\neq \gamma(p_2),$ and Property~$\mathcal{A}$ is satisfied.

Let $\ell=\lfloor(\sqrt{n}-1)/2\rfloor$ and $k=2\ell+1$. Clearly, $k\leq\sqrt{n}.$ Pick vertices $u$ and $v$ from the two sides of the graph. The number of simple paths from $u$ to $v$ is
$$(n-1)(n-1)(n-2)(n-2)\cdots(n-\ell)(n-\ell)\geq(n-\ell)^{2\ell}.$$
Apply Lemma~\ref{lem:main} for $D=n$. Then for every $\gamma_0\in\F_2^d$, the number of paths from $u$ to $v$ with length $k$ and weight $\gamma_0$ is at most $k^{\log_2k+1}n^{k-\log_2k-1}$. Hence we have
\begin{align*}
2^d \geq & \frac{(n-\ell)^{2\ell}}{k^{\log_2k+1}n^{k-\log_2k-1}}\\
& = \frac{n^{\log_2k}}{k^{\log_2k+1}}\cdot\left(\frac{n-\ell}{n}\right)^{2\ell}\\
& = \frac{n^{\log_2k}}{k^{\log_2k+1}}\cdot \exp(\Theta(\ell^2/n))\\
& =n^{\Omega(\log n)}
\end{align*}
where we used $k=2\ell+1$ and $k,\ell=\Theta(\sqrt{n})$. It follows immediately that $d=\Omega(\log^2n).$
\end{proofof}

\subsection{Characterizing correctable erasure patterns in $T_{n \times n}(1,1,h)$}
\label{sec:t11h}

Recall that the topology $T_{n \times n}(1,1,h)$ is defined by the constraints
\begin{align}
\forall j \in [n], \sum_{i=1}^{m}\alpha_ix_{ij} & = 0, \label{eq:col}\\
\forall i \in [m], \sum_{j=1}^{n}\beta_jx_{ij} & = 0, \label{eq:row}\\
\sum_{i=1}^m\sum_{j=1}^{n}\gamma_{ij}^{(k)}x_{ij} & = 0\quad \forall k\in [h]. \label{eq:global}
\end{align}
An assignment of coefficients specifies a code $\C =
\C(\{\alpha_i\},\{\beta_j\},\{\gamma_{ij}^{(k)}\})$ that instantiates
this topology.  We start by showing that the row and column parity
equations can be taken to be simple XORs without a loss of generality.

\begin{Lem}
\label{lem:1}
Let $m, n \geq 3$ and $h\leq (m-a)(n-b)-\max\{(m-a),(n-b)\}$.
There exists an MR instantiation of $T_{m \times n}(1,1,h)$ where $\alpha_i =1, \beta_j =1$ for every $i \in [m], j \in [n].$
\end{Lem}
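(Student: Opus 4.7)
The plan is to show that starting from any MR instantiation of $T_{m \times n}(1,1,h)$, an invertible diagonal change of variables yields an MR instantiation whose row and column coefficients are all $1$, with the same global parity structure (possibly with different coefficients). This will establish the desired existence, provided some MR instantiation of $T_{m \times n}(1,1,h)$ exists over some field, which follows from a standard random-coefficient argument over a sufficiently large field.

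First I would invoke Proposition~\ref{Prop:DualToPrimal}(3) with $a=b=1$: since the hypothesis on $h$ matches condition (\ref{Eqn:TechLoc}), we obtain that $C_{\mathrm{col}}$ is an $[m,m-1,2]$ MDS code and $C_{\mathrm{row}}$ is an $[n,n-1,2]$ MDS code. The code $C_{\mathrm{col}}$ is defined by the single equation $\sum_{i} \alpha_i z_i = 0$, and for this to give distance $2$ (i.e.\ correct any single erasure) one needs $\alpha_i \neq 0$ for every $i \in [m]$. The same argument shows $\beta_j \neq 0$ for every $j \in [n]$.

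Next I would perform the substitution $y_{ij} = \alpha_i \beta_j \, x_{ij}$, which is invertible coordinate-wise since every $\alpha_i, \beta_j$ is nonzero. Under this substitution, the column constraint $\sum_i \alpha_i x_{ij}=0$ becomes $\beta_j^{-1}\sum_i y_{ij}=0$, equivalently $\sum_i y_{ij} = 0$; the row constraint becomes $\sum_j y_{ij} = 0$ analogously; and each global constraint $\sum_{i,j}\gamma_{ij}^{(k)} x_{ij}=0$ becomes $\sum_{i,j}\tilde\gamma_{ij}^{(k)} y_{ij}=0$ with $\tilde\gamma_{ij}^{(k)} = \gamma_{ij}^{(k)}/(\alpha_i\beta_j)$. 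The code $\C'$ in the $y$-variables therefore instantiates $T_{m\times n}(1,1,h)$ with $\tilde\alpha_i = \tilde\beta_j = 1$.

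Finally I would verify that $\C'$ inherits maximal recoverability from $\C$. The diagonal invertible map $x_{ij} \mapsto \alpha_i\beta_j x_{ij}$ sends the constraint system of $\C$ bijectively onto that of $\C'$, and multiplies each column of the parity-check matrix by a nonzero scalar. Consequently, for any erasure pattern $E \subseteq [m]\times[n]$, the submatrix of parity checks restricted to the columns in $E$ in $\C$ has the same rank as the corresponding submatrix in $\C'$; hence $E$ is correctable in $\C$ iff it is correctable in $\C'$. Since $\C$ corrects all topology-correctable patterns, so does $\C'$. This is really the only step that could be subtle, but it reduces to the routine observation that invertible column scalings preserve rank and hence correctability of any fixed erasure pattern.
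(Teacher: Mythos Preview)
Your proposal is correct and follows essentially the same approach as the paper's proof: both start from an arbitrary MR instantiation, invoke Proposition~\ref{Prop:DualToPrimal}(3) to conclude all $\alpha_i,\beta_j$ are nonzero, perform the diagonal change of variables $y_{ij}=\alpha_i\beta_j x_{ij}$, and observe that the resulting code corrects exactly the same erasure patterns. Your write-up is slightly more explicit about why correctability is preserved (via column-scaling and rank) and about the existence of some MR instantiation, but these are just elaborations of the same argument.
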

\begin{proof}
Consider an arbitrary MR instantiation of $T_{m \times n}(1,1,h).$ By
the choice of $h$, by item (3) in Proposition~\ref{Prop:DualToPrimal} for every $i, j$, $\alpha_i \neq 0$, $\beta_j \neq 0$. Let us define new variables $z_{ij} = \alpha_i\beta_jx_{ij}$. Since $\beta_j^{-1} \in \F^*$ is defined, $\beta_j^{-1}z_{ij} = \alpha_ix_{ij}$.
So we can rewrite~\eqref{eq:col} as
\begin{align}
\notag\forall j \in [n], \ \beta_j^{-1}\cdot \sum_{i=1}^mz_{ij} & = 0 \\
\text{hence} \ \ \forall j \in [n], \   \sum_{i=1}^mz_{ij} & = 0. \label{eq:col1}
\end{align}
Similarly, since $\alpha_i^{-1} \in \F^*$ is defined, we can rewrite~\eqref{eq:row} as
\begin{align}
\sum_{j=1}^nz_{ij} =0.\label{eq:row1}
\end{align}
For each $k\in [h],$ setting $\gamma^{(k)'}_{ij} = \gamma_{ij}^{(k)}\alpha_i^{-1}\beta_j^{-1},$ we can rewrite~\eqref{eq:global} as
\begin{align}
\sum_{ij}\gamma^{(k)'}_{ij}z_{ij} = 0.\label{eq:global1}
\end{align}
It is clear that the code on the $z_{ij}$s defined by~\eqref{eq:col1},~\eqref{eq:row1} and~\eqref{eq:global1} can correct the same set of failures as the original code on the $x_{ij}$s, so the claim is proved.
\end{proof}

For the remainder of this section, we assume that $m,n,h$ satsify the
conditions of Lemma \ref{lem:1}. So we can consider MR
instantiations of $T_{m\times n}(1,1,h)$ where $\alpha_i =1, \beta_j
=1$ for every $i \in [m], j \in [n].$ Such instantiations are
specified by setting the coefficients $\{\gamma_{ij}^{(k)}\},$ $k\in
[h],$ i.e., we have $\C = \C(\{\gamma_{ij}^{(k)}\}).$

A failure pattern is given by a subset of  edges in the complete
bipartite graph $K_{m,n}$. For each $(i,j) \in E$ we have variables
$x_{ij}$, which are subject to parity check constraints at each
vertex. If a vertex has degree $1$, then the parity check lets us
solve for the corresponding variable. We iteratively eliminate such
vertices, until every remaining vertex in the graph has degree $2$ or
higher. Let $E$ denote the set of remaining failures and let $L, R$
denote the subset of vertices on the two sides that have non-zero
degree. Thus we have a bipartite graph $H(L,R,E)$ where $\deg(v) \geq
2$ for every vertex $v \in L \cup R$. Let $\ell = |L|, r = |R|, e =
|E|$ and let $c$ denote the number of connected components.

In the topology $T_{m \times n}(1,1,h)$, we will refer
to~\eqref{eq:col1} and~\eqref{eq:row1} as the XOR constraints, and
to~\eqref{eq:global1} as the global constraints.

\begin{Lem}
\label{lem:patterns}
Using the notation above a failure pattern $E$ is correctable by $T_{m \times n}(1,1,h)$ iff
\begin{equation}\label{Eqn:T111}
e \leq h + \ell + r -c.
\end{equation}
\end{Lem}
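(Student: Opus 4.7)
My plan is to analyze the parity-check matrix $P$ restricted to the erased positions $E$. By Lemma~\ref{lem:1} I may assume $\alpha_i = \beta_j = 1$, so the row and column constraints reduce to XORs. The iterative degree-$1$ elimination is exactly Gaussian elimination using XOR rows at the currently degree-$1$ vertices, and therefore preserves both the rank of $P|_E$ and correctability. After this reduction every vertex of $L \cup R$ has degree $\geq 2$, and the question of whether $E$ is correctable by \emph{some} instantiation of $T_{m \times n}(1,1,h)$ becomes the question of whether the coefficients $\gamma^{(k)}_{ij}$ can be chosen so that the $(\ell + r + h) \times e$ matrix $P|_E$ has rank $e$. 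Its rows consist of the $\ell + r$ indicator vectors $\chi_v \in \F^E$ of edges incident to each $v \in L \cup R$ (the row/column XORs), together with $h$ vectors $(\gamma^{(k)}_{ij})_{(i,j) \in E}$ coming from the global parities.

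The crux of the argument is to compute the rank of the XOR block $\{\chi_v\}_{v \in L \cup R}$; I claim it equals $\ell + r - c$, which I will establish by identifying the left null space. A combination $\sum_{v \in L} a_v \chi_v + \sum_{u \in R} b_u \chi_u$, evaluated at coordinate $(v,u) \in E$, equals $a_v + b_u$, so the combination vanishes iff $a_v + b_u = 0$ for every edge $(v,u) \in E$. In characteristic~$2$ this reads $a_v = b_u$, and propagating this equality along edges forces $(a,b)$ to be constant on every connected component of the bipartite graph $H(L, R, E)$. Hence the left null space has dimension exactly $c$, so the XOR rows span a subspace $W \subseteq \F^E$ of dimension $\ell + r - c$. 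Adjoining the $h$ global rows therefore yields $\mathrm{rank}(P|_E) \leq h + \ell + r - c$, which proves the necessary direction: if $e > h + \ell + r - c$, no instantiation corrects $E$.

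For sufficiency, assume $e \leq h + \ell + r - c$, so that $\dim(\F^E / W) = e - (\ell + r - c) \leq h$. I can then pick $h$ vectors $w^{(1)}, \ldots, w^{(h)} \in \F^E$ whose images in $\F^E / W$ span that quotient, for instance a basis of a complement of $W$ padded with zero vectors. Setting $\gamma^{(k)}_{ij} = w^{(k)}_{ij}$ for $(i,j) \in E$ and $\gamma^{(k)}_{ij} = 0$ otherwise yields an instantiation of $T_{m \times n}(1,1,h)$ for which $\mathrm{rank}(P|_E) = e$, so this instantiation corrects $E$, witnessing its correctability. The one non-routine step is the kernel calculation that identifies the dependencies among the XOR rows with the connected components of $H(L, R, E)$; the remainder of the argument is bookkeeping in linear algebra.
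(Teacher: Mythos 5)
Your proof is correct and follows the paper's argument closely: both reduce the problem to computing the rank of the XOR block restricted to $E$ as $\ell + r - c$, and then observe that the $h$ global parities can contribute at most $h$ more to the rank, with equality achievable by a suitable choice of coefficients. Your left-null-space computation (null vectors are exactly the functions constant on each connected component of $H(L,R,E)$) is a cleaner and more symmetric derivation of that rank than the paper's combination of an explicit per-component dependency with an ``edge-leaving-the-subset'' independence argument, and your explicit construction of the $\gamma^{(k)}$ as a basis of a complement of $W$ (padded with zeros) makes the sufficiency direction more concrete than the paper's brief appeal to linear independence of global constraints in an MR instantiation.
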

\begin{proof}
For every edge $e \in E$, we have a variable $x_e$. Let $e\sim v$
denote that $e$ is incident to $v$.  For every vertex $v \in L \cup R$
we have the constraint
\begin{align}
\label{eq:vertex}
\sum_{e \sim v}x_e =0
\end{align}
We will first show that the rank of the XOR constraints is $\ell + r -c$.

We start with the case when $c= 1,$ and $H(L,R,E)$ is connected. The upper bound on rank comes from observing that the $\ell + r$ linear constraints satisfy the dependence
\begin{align}
\sum_{v \in L}\sum_{e \sim v}x_e = \sum_{w \in R} \sum_{e \sim w}x_e
\end{align}
since every edge appears exactly once on the LHS and the RHS.
We claim that the constraints corresponding to any smaller subset $L' \cup R'$ of vertices are linearly independent. Indeed, we can rewrite a dependence between these constraints as
\begin{align}
\sum_{v \in L'}\sum_{e \sim v}x_e + \sum_{w \in R'} \sum_{e \sim w}x_e = 0.
\end{align}
But since $L' \cup R'$ does not induce a connected subgraph, there
must be at least one edge leaving the set, and the corresponding
variable appears exactly once. This proves that the rank of the parity
check constraints equals $(\ell + r -1)$ when $c= 1$.

When there are $c \geq 2$ connected components, each connected component
involves disjoint variables on the edges, hence constraints in
different components are trivially independent. So the bound of
$(\ell + r -c)$ follows by summing the above bound for each component.

We first consider the case when the number of unknowns $e$ satisfies
$e \leq (\ell + r -c) +h$. Here, any
subset of $e - (\ell +  r- c) \leq h$ global constraints in an MR
instantiation will be linearly independent of the local XOR
constraints. Hence these equations together with the local constraints
can recover the unknown variables, so the error pattern is
correctable.

When $e > (\ell + r - c) +h$, the total rank of all constraints is at most the RHS, hence they
are insufficient to recover all $e$ unknowns.
\end{proof}

Using this lemma, we extract a simple
sufficient condition for correctability. A simple cycle in $K_{m,n}$ is a connected subgraph where each vertex
has degree exactly $2$ (in other words, we do not repeat vertices in
the cycle).

\begin{Lem}
\label{Lem:sc}
  Let $E \subseteq [m] \times [n]$ be a failure pattern such that $H(L,R,E)$ is the union of $h$
vertex disjoint simple cycles. Then $E$ is correctable in $T_{m \times n}(1,1,h)$.
\end{Lem}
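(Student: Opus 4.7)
The plan is to apply Lemma~\ref{lem:patterns} directly. That lemma reduces correctability to the inequality $e \leq h + \ell + r - c$ after the iterative elimination of degree-one vertices has stabilized, so the main task is to compute the quantities $e, \ell, r, c$ for a union of $h$ vertex-disjoint simple cycles and verify the inequality.

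First I would observe that in the graph $H(L,R,E)$ formed by the union of $h$ vertex-disjoint simple cycles, every vertex already has degree exactly $2$. Thus the iterative elimination step in the proof of Lemma~\ref{lem:patterns} removes nothing: the reduced graph is $H(L,R,E)$ itself, and $E$ is in particular the entire failure pattern. Next, recall that a simple cycle in a bipartite graph must have even length, with the number of edges equal to the number of vertices. Summing across the $h$ cycles, if cycle $i$ has $s_i$ vertices, then it has $s_i$ edges, so
\begin{equation*}
\ell + r \;=\; \sum_{i=1}^{h} s_i \;=\; e,
\end{equation*}
and the number of connected components is $c = h$.

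Plugging into the criterion from Lemma~\ref{lem:patterns} gives
\begin{equation*}
h + \ell + r - c \;=\; h + e - h \;=\; e,
\end{equation*}
so the inequality $e \leq h + \ell + r - c$ holds with equality. Lemma~\ref{lem:patterns} then certifies that $E$ is correctable for the topology $T_{m\times n}(1,1,h)$, which is exactly the claim.

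There is essentially no obstacle here: the proof is a direct arithmetic check once one recognizes that a union of vertex-disjoint simple cycles is already in the ``reduced'' form of Lemma~\ref{lem:patterns}. The only mild subtlety worth spelling out is the bipartite character of the cycles (and hence the identity $|V(C_i)| = |E(C_i)|$), which is what makes the counts line up to meet the bound exactly. The resulting equality $e = h + \ell + r - c$ is also intuitively correct: removing any global constraint would drop the bound to $e - 1$, leaving the pattern uncorrectable, so the simple-cycle union is a tight ``extremal'' family of correctable patterns.
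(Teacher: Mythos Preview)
Your proposal is correct and follows essentially the same approach as the paper: both verify the inequality of Lemma~\ref{lem:patterns} by computing $c=h$ and $e=\ell+r$ for a union of $h$ vertex-disjoint simple cycles, concluding that the bound holds with equality. One small remark: the identity $|V(C_i)|=|E(C_i)|$ holds for any simple cycle, so bipartiteness is not actually needed for that step (the paper just uses the handshake count $2e=2(\ell+r)$).
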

\begin{Proof}
Let $E =\cup_{i=1}^h C_i$ where the $C_i$s are vertex disjoint simple
cycles. We need to check the condition $e \leq h + \ell + r -c$. But $c =h$ since each cycle is a distinct component. There are a total of $\ell + r$ vertices, each of which has degree $2$, so $2e = 2(\ell + r)$.
Hence we in fact have $e = h + \ell + r -c$.
\end{Proof}

Simple cycles are in fact the only correctable patterns in $\tmn(1,1,1)$.

\begin{Lem}
Correctable failure patterns in $T_{m \times n}(1,1,1)$ correspond to
simple cycles in $K_{m,n}$.
\end{Lem}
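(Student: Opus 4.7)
The plan is to deduce this as a direct corollary of Lemma 4.5 (the correctability criterion for $T_{m\times n}(1,1,h)$ applied with $h=1$) combined with an elementary handshake argument on the reduced graph $H(L,R,E)$. After iteratively eliminating degree-one vertices (each of whose variable is determined by its single parity equation), we are left with a graph in which every vertex has degree at least two; the claim to prove is that this reduced graph is either empty or a single simple even cycle in $K_{m,n}$.

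For the ``if'' direction, I would simply invoke Lemma \ref{Lem:sc} with $h=1$: a single simple cycle is trivially a union of $h=1$ vertex-disjoint simple cycles, and hence correctable.

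For the ``only if'' direction, suppose $E$ is correctable and let $H(L,R,E)$ denote the reduced graph, with $\ell = |L|$, $r = |R|$, $e = |E|$, $c$ connected components, and $\delta(H) \geq 2$. The handshake identity gives
\[
2e \;=\; \sum_{v \in L \cup R}\deg(v) \;\geq\; 2(\ell + r),
\]
so $e \geq \ell + r$. On the other hand, Lemma \ref{lem:patterns} with $h=1$ gives $e \leq 1 + \ell + r - c$. Combining these two inequalities yields $c \leq 1$, so either $H$ is empty (in which case the pattern was fully resolved by the elimination process) or $c = 1$. In the latter case the inequalities collapse to equalities: $e = \ell + r$, which forces $\sum_v \deg(v) = 2e = 2(\ell+r)$, so every vertex has degree exactly two. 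A connected $2$-regular graph is a cycle, and since $H \subseteq K_{m,n}$ is bipartite this cycle is simple and even.

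I do not foresee a substantive obstacle; the only wrinkle is the bookkeeping of the degenerate case where the reduction already resolves all variables (so $H$ is empty), which should be acknowledged as the trivial subcase of ``a simple cycle (possibly empty).'' Everything else is an immediate consequence of Lemma \ref{lem:patterns} and degree-counting, which is precisely why this characterization is tight only for $h=1$: the slack $h + \ell + r - c - e$ is pushed all the way to zero by the minimum-degree constraint, forcing $2$-regularity and a single component.
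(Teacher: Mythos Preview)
Your proposal is correct and essentially identical to the paper's proof: both combine the lower bound $e \geq \ell + r$ from the minimum-degree-$2$ condition with the upper bound $e \leq \ell + r - c + 1$ from Lemma~\ref{lem:patterns} to force $c=1$ and $2$-regularity. The paper omits the explicit ``if'' direction (it is covered by Lemma~\ref{Lem:sc}) and the empty-graph edge case, but otherwise the arguments match.
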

\begin{Proof}
Since every vertex has degree at least $2$, we have
\[ e \geq \max(2\ell,2r) \geq \ell + r,\]
with strict inequality whenever some vertex has degree exceeding $2$. By Lemma \ref{lem:patterns},
\[ e \leq \ell + r -c +1 \leq \ell + r \]
with equality iff $c =1$. Thus if the error pattern $E$ is correctable, it is connected and every vertex in it has degree exactly $2$, so $E$ must be a simple cycle.
\end{Proof}

\subsection{Lower Bounds for $T_{n \times n}(a,b,h)$}
\label{Sec:mainLB}

We start by showing the lower bound for $T_{n \times n}(1,1,1)$
(Corollary \ref{Cor:LB} of Theorem \ref{Thm:LBgen}).

\begin{Cor}\label{Cor:Sim}
Let $\C = \C(\{\gamma_{e}\})_{e\in [m]\times [n]}$ be an instantiation of $T_{m
  \times n}(1,1,1)$.
The error pattern corresponding to a
simple cycle $C$ is correctable by $\C$ iff
\begin{align} \label{eq:correctable}
\sum_{e \in C} \gamma_e \neq 0.
\end{align}
\end{Cor}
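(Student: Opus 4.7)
The plan is to analyze the rank of the combined system consisting of the XOR constraints at the vertices of the cycle together with the single global constraint, and show that the global constraint adds a new independent equation precisely when $\sum_{e \in C} \gamma_e \neq 0$.

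First I would set up the linear algebra. Let $C$ be a simple cycle in $K_{m,n}$; since the graph is bipartite, $C$ uses $\ell = r = k$ vertices on each side and $|C| = 2k$ edges, and every vertex in the cycle has degree exactly $2$. After substituting the known (non-erased) values into \eqref{eq:col1},~\eqref{eq:row1}, and~\eqref{eq:global1}, correctability of $E=C$ is equivalent to the statement that the coefficient matrix of the resulting system in the unknowns $\{x_e\}_{e\in C}$ has rank $2k$. By Lemma~\ref{lem:patterns} (applied with $h=1$, $c=1$, $\ell+r=2k$, $e=2k$) the bound $e \leq h+\ell+r-c$ is met with equality, so full rank is equivalent to the global equation $\sum_{e\in C}\gamma_e x_e = (\text{const})$ being linearly independent from the XOR constraints.

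The heart of the argument is the following claim: viewed as a subspace of $\F^{C}$, the row span of the XOR constraints indexed by the cycle vertices is exactly
\[
W \;=\; \Bigl\{\,\delta \in \F^{C}\;:\;\sum_{e\in C}\delta_e = 0\,\Bigr\}.
\]
Inclusion $\subseteq$ follows from the observation that any linear combination $\sum_v \lambda_v \bigl(\sum_{e\sim v} x_e\bigr)$ assigns to each edge $e=(v,w)$ the coefficient $\lambda_v+\lambda_w$, and summing over all $e \in C$ gives $\sum_v \deg_C(v)\lambda_v = 2\sum_v \lambda_v = 0$ in characteristic $2$. For the reverse inclusion, the same formulas let us solve for the $\lambda_v$'s iteratively as we walk around the cycle; consistency when we return to the starting vertex is exactly the condition $\sum_{e\in C}\delta_e=0$. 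Alternatively one just notes that both spaces have dimension $2k-1$, matching the rank count already computed in Lemma~\ref{lem:patterns}.

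Given the claim, the restriction of the global constraint to $C$ lies in the XOR span if and only if $\sum_{e\in C}\gamma_e=0$. Thus when $\sum_{e\in C}\gamma_e\neq 0$ the combined system has rank $2k=|C|$ and uniquely recovers $\{x_e\}_{e\in C}$, while when $\sum_{e\in C}\gamma_e=0$ the rank remains $2k-1<|C|$ and there is a non-trivial kernel, so $C$ is not correctable. The only place requiring care is the characterization of $W$; once that is in hand, both directions of the corollary are immediate. Note that characteristic $2$ is used essentially in showing $W$ is contained in the stated space, which is what makes simple cycles the natural obstruction here.
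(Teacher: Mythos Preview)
Your proof is correct. It differs from the paper's argument in that you work on the \emph{row-span} side, showing that the span of the XOR constraints restricted to the cycle is the hyperplane $W=\{\delta:\sum_{e\in C}\delta_e=0\}$ and then checking whether the global row $(\gamma_e)_{e\in C}$ lies in $W$. The paper instead works on the \emph{kernel} side: it observes directly that on a simple cycle the XOR constraints force $x_e=x$ for all $e\in C$ (just chase the two parity equations around the cycle), so the kernel of the XOR system is one-dimensional, and then substitutes into the global equation to get $x\sum_{e\in C}\gamma_e=0$. The two arguments are dual to each other; the paper's is a bit more concrete (no dimension count or appeal to Lemma~\ref{lem:patterns} is needed, and characteristic~$2$ enters simply because ``$x_e+x_{e'}=0\Rightarrow x_e=x_{e'}$''), while yours makes the role of the row span explicit and handles both directions of the ``iff'' symmetrically.
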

\begin{proof}
Let $E$ be a simple cycle. The parity check constraints enforce the condition $x_e = x$ for every $e \in E$. Plugging this into the global parity gives
\[ \sum_{e\in E} \gamma_e x = x\sum_{e \in E}\gamma_e = 0.\]
Assuming that $\sum_{e \in E}\gamma_e  \neq 0$, the only solution to
this system is $x =0$. This shows that the system of equalities
defined by the variables $x_e$ and the parity check equations has a
trivial kernel, so it is invertible.
\end{proof}

Corollary~\ref{Cor:LB} now follows using Lemma~\ref{Lemma:Main}.
\begin{proofof}{of Corollary~\ref{Cor:LB}}
Consider an MR instantiation of $T_{n\times n}(1,1,1)$ over a field
$\mathbb{F}_q$ where $q=2^d.$ Use the global constraint to produce an assignment $\{\gamma_e\}$ of weights to edges of
$K_{n,n}.$ By Corollary~\ref{Cor:Sim} every simple cycle in $K_{n,n}$
now carries a non-zero weight. By Lemma~\ref{Lemma:Main} we have
$d\geq \Omega((\log n)^2).$ Thus $q\geq n^{\Omega(\log n)}.$
\end{proofof}

Next we consider $T_{m \times n}(1,1,h)$ for $h \leq n$ (for larger $h$, the bound claimed is trivial). We can consider MR instantions $\C = \C(\{\gamma_{ij}^{(k)}\})$ for $k \in [h]$ since $\alpha_i =1,
\beta_j =1$ for every $i \in [m], j \in [n]$. Let $\Gamma_{ij} \in
\F^h$ denote the vector $(\gamma_{ij}^{(k)})_{k \in [h]}$ of
coefficients associated with $x_{ij}$. For $S
\subseteq [n] \times [n]$, let $\Gamma(S) = \sum_{(i,j) \in E
}\Gamma_{ij}$ denote the sum of coefficient vectors over all indices
in the set $E$.

\begin{Cor}
  \label{Cor:Sim_h}
  In an MR instantiation of $\tmn(1,1,h)$, for any vertex-disjoint
  cycles $C_1,\ldots,C_h$, the vectors $(\Gamma(C_1),\ldots,\Gamma(C_h))$ are linearly independent.
\end{Cor}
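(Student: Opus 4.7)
The plan is to reduce the claim directly to correctability of the failure pattern $E = C_1 \cup \cdots \cup C_h$, via Lemma~\ref{Lem:sc}, and to analyze what the parity check equations look like once we restrict to this pattern. Working with the normalized form from Lemma~\ref{lem:1} (so $\alpha_i = \beta_j = 1$), the row and column XOR constraints at each vertex force the variables on the edges of a single cycle to be \emph{equal}: at every vertex of $C_i$ exactly two erased edges meet, and the constraint at that vertex says the two corresponding variables sum to zero, i.e.\ agree. Traversing the cycle, all $x_e$ for $e\in C_i$ take a common value $y_i$; since the cycles are vertex-disjoint the $y_i$'s are free parameters independent across $i$.

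Next I would substitute $x_e = y_i$ for $e \in C_i$ into the $k$-th global parity. Writing $\Gamma^{(k)}(C_i) := \sum_{e\in C_i} \gamma_e^{(k)}$ for the $k$-th coordinate of $\Gamma(C_i)$, this substitution turns the $k$-th global constraint (restricted to $E$) into
\[
\sum_{i=1}^{h} y_i\, \Gamma^{(k)}(C_i) \;=\; 0,\qquad k\in[h].
\]
Thus the full linear system determining the erased symbols, after the XOR constraints have been used up, is the $h\times h$ system $M\mathbf y = 0$ where $M_{k,i} = \Gamma^{(k)}(C_i)$, i.e.\ the columns of $M$ are exactly the vectors $\Gamma(C_1),\ldots,\Gamma(C_h)$.

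Finally I would invoke Lemma~\ref{Lem:sc}, which tells us that $E = C_1\cup\cdots\cup C_h$ is a correctable failure pattern for the topology $T_{n\times n}(1,1,h)$, and hence for any MR instantiation. Correctability means precisely that the only solution of the homogeneous system above is the trivial one $\mathbf y = 0$, i.e.\ that $M$ has full column rank. Therefore its columns $\Gamma(C_1),\ldots,\Gamma(C_h)$ are linearly independent, as claimed.

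There is no real obstacle; the only subtle point is the normalization $\alpha_i = \beta_j = 1$. This is justified by Lemma~\ref{lem:1}: the corollary is used in the proof of the field-size lower bound, which lower-bounds the field size of some MR instantiation, so we are free to choose the normalized one, after which the XOR constraints along a cycle really do collapse to $x_e \equiv y_i$.
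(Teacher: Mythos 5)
Your proof is correct and follows essentially the same route as the paper: invoke Lemma~\ref{Lem:sc} for correctability of $E = \bigcup_t C_t$, observe that the XOR constraints collapse each cycle $C_t$ to a single variable $y_t$, substitute into the global parities to get $\sum_t y_t\,\Gamma(C_t)=0$, and conclude linear independence from the triviality of the kernel. You spell out a bit more detail than the paper (the normalization via Lemma~\ref{lem:1} and the explicit form of the $h\times h$ system), but there is no substantive difference.
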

\begin{Proof}
By Lemma \ref{Lem:sc},  the failure pattern $E = \cup_{t=1}^h C_t$ is correctable.
By the vertex disjointness, the parity check constraints imply that edge in the cycle $C_i$
carries the same variable $x_t$. Plugging this in the global parity equations gives
\[ \sum_{t=1}^h\Gamma(C_t)x_t = 0.\]
  The kernel is trivial iff  $(\Gamma(C_1),\ldots,\Gamma(C_h))$
  are linearly independent over $\F$.
\end{Proof}

\begin{Lem}
\label{lem:11h}
  Let $1 \leq h \leq n$. Any maximally recoverable code for the topology $T_{m \times
  n}(1,1,h)$ requires field size $q = \exp(\Omega((\log(\min(m,n)/h))^2))$.
\end{Lem}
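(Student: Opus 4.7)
The plan is to reduce the problem to the single–global-parity case handled by Corollary~\ref{Cor:LB}, by projecting away $h-1$ of the $h$ global parity coordinates. Write $N = \min(m,n)$. Since $\log(N/h)$ is bounded once $h \geq N/4$, the stated bound is trivial in that regime, so I may assume $h \leq N/4$ (which in particular keeps us inside the range where Lemma~\ref{lem:1} applies). By Lemma~\ref{lem:1} the MR instantiation may be taken to have XOR row and column parities, so the code is specified by coefficient vectors $\Gamma_{ij} \in \F_q^h$.

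First I would fix $h-1$ pairwise vertex-disjoint simple $4$-cycles $C_1,\dots,C_{h-1}$ confined to the last $2(h-1)$ rows and the last $2(h-1)$ columns of the grid. The remaining sub-grid $[m']\times[n']$ with $m' = m-2(h-1)$, $n'=n-2(h-1)$ is vertex-disjoint from every $C_t$, and satisfies $\min(m',n') \geq N/2 \geq N/h$. Any $4$-cycle drawn from this sub-grid extends the collection to $h$ pairwise vertex-disjoint simple cycles, so Corollary~\ref{Cor:Sim_h} forces the $(h-1)$ vectors $\Gamma(C_1),\dots,\Gamma(C_{h-1})$ to already be linearly independent over $\F_q$; thus they span a hyperplane $W \subset \F_q^h$.

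Next I would form the quotient $\pi : \F_q^h \twoheadrightarrow \F_q^h/W \cong \F_q$ and set $\tilde\gamma_e := \pi(\Gamma_e)$ for every edge $e$ of $[m']\times[n']$. For any simple cycle $C$ in the sub-grid, the collection $\{C_1,\dots,C_{h-1},C\}$ is again $h$ pairwise vertex-disjoint simple cycles, so a second application of Corollary~\ref{Cor:Sim_h} gives $\Gamma(C) \notin W$, equivalently $\sum_{e \in C}\tilde\gamma_e = \pi(\Gamma(C)) \neq 0$. Identifying $\F_q = \F_2^d$ with $q = 2^d$, the map $\tilde\gamma$ is then an $\F_2^d$-labeling of the edges of $K_{m',n'}$ in which every simple cycle has nonzero total weight.

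Finally, I would restrict $\tilde\gamma$ to a balanced bipartite subgraph $K_{N',N'}$ with $N' = \min(m',n')$ and invoke Lemma~\ref{Lemma:Main}, obtaining $d = \Omega((\log N')^2) = \Omega((\log(N/h))^2)$, whence $q = 2^d \geq \exp(\Omega((\log(\min(m,n)/h))^2))$. The only non-mechanical step is the reduction itself: verifying that $h-1$ vertex-disjoint simple cycles \emph{rigidify} a hyperplane $W$, and that every simple cycle in the complementary sub-grid must project outside it. Once that is in hand, the super-polynomial bound is inherited directly from Lemma~\ref{Lemma:Main}, and the remaining work is only bookkeeping on grid sizes.
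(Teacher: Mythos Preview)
Your argument is correct and in fact a bit cleaner than the paper's own. Both proofs reduce to Lemma~\ref{Lemma:Main} on a smaller sub-grid, but they isolate the ``single coordinate'' differently. The paper partitions the smaller side into $h$ blocks $P_1,\dots,P_h$ and argues by pigeonhole/contradiction: if for coordinate $k=1$ every block $P_t$ contained a simple cycle with vanishing $k$-th coordinate, those $h$ cycles would be vertex-disjoint yet have dependent $\Gamma$-vectors, contradicting Corollary~\ref{Cor:Sim_h}. So some block of size $\lfloor n/h\rfloor$ has all its cycles nonzero in that fixed coordinate, and Lemma~\ref{Lemma:Main} applies there. You instead \emph{spend} $h-1$ concrete $4$-cycles to pin down a hyperplane $W$ and then work in the quotient $\F_q^h/W\cong\F_q$; Corollary~\ref{Cor:Sim_h} now says every simple cycle in the complementary sub-grid has $\Gamma(C)\notin W$, i.e.\ nonzero image, and Lemma~\ref{Lemma:Main} applies directly. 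Your route avoids the coordinate-by-coordinate pigeonhole and leaves a sub-grid of size $\min(m,n)-2(h-1)\ge N/2$ rather than $\lfloor N/h\rfloor$, so it actually yields the marginally stronger conclusion $q\ge\exp(\Omega((\log N)^2))$ in the regime $h\le N/4$; the paper's partition is lossier but still suffices for the stated bound. The only checks you flag---that the $h-1$ fixed cycles already have independent $\Gamma$-vectors, and that Lemma~\ref{lem:1} is available---are handled exactly as you say.
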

\begin{Proof}
Assume that $m > n$, else we reverse their roles.
Let us partition $[n]$ into $h$ nearly equal parts
$P_1,\ldots,P_h$ of size at least  $\lfloor n/h \rfloor$ each.
We will consider sequences of $h$ simple cycles $(C_1,\ldots,C_h)$, where
$C_t$ only involves edges from $P_t \times P_t$. Note that these cycles
are vertex disjoint. We claim that for every $k \in [h]$, there exists $t \in [h]$ such that for every
  simple cycle $C_t$ with edges from $P_t \times P_t$, $\sum_{(i,j)
    \in C_t}\Gamma_{ij} \neq 0$.

  Assume for contradiction that $\exists k \in [h]$ so that $\forall
  t \in [h]$, $\exists C_t$ with edges from $P_t \times P_t$, such
  that $\sum_{(i,j) \in C_t}\gamma_{ij}^{(k)} = 0$. Now consider the error pattern $E =
  \cup_t C_t$. The vectors $\Gamma(C_1),\ldots,\Gamma(C_h) \in \F^h$ are all $0$
  in coordinate $k$, so they cannot be linearly independent. But this
  contradicts Corollary \ref{Cor:Sim_h}.

  Now consider $k =1$. There exists $P_t$ of size at least $\lfloor
  n/h \rfloor$ so that any simple cycle $C_t$ in $P_t \times P_t$
  satsifies $\sum_{(i,j) \in C_t}\gamma_{ij}^1 \neq 0$. The lower
  bound now follows from Lemma \ref{Lemma:Main}.
\end{Proof}

We now extend the proof to the case where $a, b \geq 1$, proving
Theorem \ref{Thm:LBgen} . We will assume that $ h \leq \min((m -a), (n-b))$, else the claim is trivial.

\begin{proofof}{of Theorem \ref{Thm:LBgen}}
Let $\C$ be an MR instantiation of
$T_{n \times n}(a,b,h)$. By Item (2) in Proposition
\ref{Prop:DualToPrimal} we can pick a subset $S$ of $[m -a] \times [n
  -b]$ of size $(m-a)(n-b) - h$, and let these be the information
symbols, while the remaining are parity checks symbols. We claim that puncturing
the code by restricting it to co-ordinates in $[m -a +1] \times [n- b
  +1]$ results in a code $\C'$ that is an MR instantiation of the topology $T_{(m-a +1)
  \times (n -b + 1)}(1,1,h)$. From this claim, the theorem follows by an
application of Lemma \ref{lem:11h}.

It is easy to see that $\C'$ does instantitate the topology $T_{(m-a +1)
  \times (n -b + 1)}(1,1,h)$. We will prove that it is MR by
contradiction. Assume that some failure pattern $E' \subseteq [m -a
  +1] \times [n- b +1]$ is not correctable for $\C'$ but is corrected
by some instantiation $\C''$ of $T_{(m-a +1)  \times (n -b +
  1)}(1,1,h)$. We extend $E'$ to a failure pattern $E$ in $[m]
\times [n]$ by adding all the puntured co-ordinates to $E'$. This
resulting failure pattern is corrected by any instantiation of $T_{m
  \times n}(a,b,h)$ whose puncturing to $[m -a] \times [n -b]$ is
$\C''$, but not by $\C$ since decoding $E$ using $\C$ reduces to
correcting $E'$ using $\C'$. This contradicts the assumption that $\C$ is MR.
\end{proofof}

\section{Characterizing correctable patterns in $T_{m \times n}(1,b,0)$}
\label{Sec:Class}

In this section, we will prove Theorem~\ref{Th:Class} in three steps:
\begin{enumerate}
\item Lemma~\ref{Lemma:OmitLow} shows that it suffices to consider erasure patterns $E$ where every non-empty row of $E$ has at least $b+1$ erasures.
\item Lemma~\ref{Lemma:BplusOne} establishes the Theorem for erasure patterns $E$ where every non-empty row has weight exactly $b+1$.
\item Lemma~\ref{Lemma:BoostingReg} extends the proof to general regular erasure patters.
\end{enumerate}

Note that by Lemma~\ref{Lemma:RegNecess} we only need to argue that regularity is sufficient for MR correctability. To do this, for every regular erasure pattern $E\subseteq [m]\times [n]$ we need to exhibit a column code $C_{\mathrm{col}}$ and a row code $C_{\mathrm{row}}$, so that $C_{\mathrm{col}}\otimes C_{\mathrm{row}}$ corrects $E$. Indeed, we can tailor the choice of these codes to the pattern $E$, and our proof will use the flexibility.

Let $E\subseteq [m]\times [n]$ be an erasure pattern for $T_{m\times n}(1,b,0).$ For $i\in [m],$ we refer to $\left(\{i\}\times [n]\right)\cap E$ as the $i$-th row of $E.$ We often call the number of elements in the $i$-th row the {\it weight} of the row.

\begin{lemma}\label{Lemma:OmitLow}
Let $E$ be an erasure pattern for $T=T_{m\times n}(1,b,0).$ Suppose $E^\prime \subseteq E$ is obtained from $E$ by restricting $E$ to rows where $E$ has $b+1$ or more erasures; then $E$ is correctable for $T$ iff $E^\prime$ is correctable for $T.$
\end{lemma}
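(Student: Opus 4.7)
The statement is an ``if and only if,'' and my plan is to handle the two directions separately, with the forward direction being essentially trivial and the reverse direction being the substantive one (though still short).

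For the forward direction, if $E$ is correctable for $T$ then by Definition~\ref{Def:CorrectablePatt} some instantiation $C$ of $T$ corrects $E$; since $E^\prime \subseteq E$, the same code $C$ corrects $E^\prime$ as well, because any linear system solving for the unknowns in $E$ a fortiori solves for the unknowns in any subset of $E$ (equivalently, turning a known symbol into an unknown can only raise the rank deficiency of the decoding system). Hence $E^\prime$ is correctable.

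For the reverse direction, my plan is to exhibit a single code that corrects $E$, and the natural candidate is any maximally recoverable instantiation $C$ of $T_{m\times n}(1,b,0)$. Such a code exists over a sufficiently large field (e.g., a generic choice of coefficients satisfying the topology is MR, as noted in the introduction). Since $E^\prime$ is correctable by hypothesis and $C$ is MR, $C$ must correct $E^\prime$. Moreover, the hypothesis of Proposition~\ref{Prop:DualToPrimal}(3) is satisfied here ($h=0$, so the condition $h \leq (m-a)(n-b) - \max\{m-a,n-b\}$ holds as long as $m\geq 2$ and $n\geq b+1$, which we may assume), so the row code $C_{\mathrm{row}}$ is an $[n, n-b, b+1]$ MDS code and can correct any $\leq b$ erasures in a single row.

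Given a codeword of $C$ with erasures at positions $E$, I would decode in two stages. In the first stage, for each row $i \in [m]$ whose $E$-weight is at most $b$, apply $C_{\mathrm{row}}$ to recover the erased symbols in that row. After this stage, the only remaining erasures lie in rows whose $E$-weight is at least $b+1$, which by definition are exactly the positions of $E^\prime$. In the second stage, use the fact that $C$ corrects $E^\prime$ to recover the rest. This shows $C$ corrects $E$, hence $E$ is correctable for $T$. There is no real obstacle here; the only mild subtleties are justifying that an MR instantiation exists and that its row code is MDS, which follow respectively from the introductory discussion on generic codes and from Proposition~\ref{Prop:DualToPrimal}(3).
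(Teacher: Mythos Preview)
Your proof is correct and follows essentially the same approach as the paper: the forward direction is immediate from $E' \subseteq E$, and for the reverse direction you pass to an MR instantiation, use Proposition~\ref{Prop:DualToPrimal}(3) to get that each row is an $[n,n-b,b+1]$ MDS code, decode the light rows first, and then invoke correctability of $E'$. The only extra detail you add is the explicit check that the hypothesis of Proposition~\ref{Prop:DualToPrimal}(3) holds when $h=0$, which is harmless (and indeed automatic from the constraints $a\leq m-1$, $b\leq n-1$ in Definition~\ref{Def:Topology}).
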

\begin{proof}
Clearly, if $E$ is correctable for $T$ then $E^\prime \subseteq E$ is also correctable for $T$. We need to show the converse. So assume that code $C$ instantiating the topology $T$ corrects $E^\prime$. We can assume that $C$ is maximally recoverable since a maximally recoverable code for this topology will also correct $E^\prime$. By Proposition \ref{Prop:DualToPrimal}, each row of $C$ is an MDS code capable of correcting $b$ erasures. So we can use row decoding to correct all rows that have $b$ or fewer erasures, which reduces the problem to correcting $E^\prime$. By assumption, $C$ can correct $E^\prime$, and hence $E$.
\end{proof}

Below is the main technical lemma of this Section.

\begin{lemma}\label{Lemma:BplusOne}
Let $E$ be a regular pattern for $T=T_{m\times n}(1,b,0).$ Suppose that every row of $[m]\times [n]$ that intersects $E,$ intersects $E$ in exactly $b+1$ locations; then $E$ is correctable for $T.$
\end{lemma}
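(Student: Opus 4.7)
The plan is to reduce the correctability of $E$ to a linear-independence statement about Reed--Solomon codewords with prescribed supports, and then to use the regularity of $E$ combinatorially to produce such codewords. Following the normalization argument of Lemma~\ref{lem:1} (which applies in our setting because, by item~(3) of Proposition~\ref{Prop:DualToPrimal}, every column-parity coefficient $\alpha_i$ is nonzero), I may assume that $C_{\mathrm{col}}$ is the standard $[m,m-1]$ parity code $\sum_i x_{ij}=0$. Let $R\subseteq[m]$ denote the rows that intersect $E$. For each $i\in R$, since $|E_i|=b+1$ equals the minimum distance of the MDS row code $C_{\mathrm{row}}$, there is a unique-up-to-scalar codeword $y^{(i)}\in C_{\mathrm{row}}$ with support exactly $E_i$. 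Zeroing out the unerased cells and parameterizing the erased part of row $i$ as $t_i y^{(i)}$, the column parities reduce to the single vector equation $\sum_{i\in R} t_i y^{(i)}=0$. Thus $E$ is correctable if and only if the codewords $\{y^{(i)}\}_{i\in R}$ are linearly independent in $C_{\mathrm{row}}$.

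I would next realize $C_{\mathrm{row}}$ as a Reed--Solomon code with distinct evaluation points $\alpha_1,\ldots,\alpha_n$ lying in a sufficiently large extension field $\F_{2^k}$, so that $y^{(i)}_j = p_i(\alpha_j)$ where $p_i(x)=\prod_{k\notin E_i}(x-\alpha_k)$ is a polynomial of degree $n-b-1$. Because the $\alpha_j$ are distinct, linear independence of the $y^{(i)}$ in $\F^n$ is equivalent to linear independence of the polynomials $p_i$ in the $(n-b)$-dimensional space $\F[x]_{<n-b}$. Equivalently, some $r\times r$ minor of the $|N(R)|\times r$ matrix with $(j,i)$ entry $p_i(\alpha_j)$ (and $0$ for $j\notin E_i$) must be nonzero as a polynomial in $\alpha_1,\ldots,\alpha_n$; once this is established, a generic substitution in a large enough $\F_{2^k}$ yields the required MDS code.

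The combinatorial core of the argument is to use regularity to exhibit such a nonvanishing minor. I would order $R$ as $i_1,\ldots,i_r$ and greedily pick columns $j_\ell\in E_{i_\ell}\setminus(E_{i_1}\cup\cdots\cup E_{i_{\ell-1}})$. When the greedy step succeeds at every stage, the resulting $r\times r$ submatrix is upper-triangular with nonzero Vandermonde-style diagonal entries $\prod_{k\notin E_{i_\ell}}(\alpha_{j_\ell}-\alpha_k)$, so its determinant is manifestly a nonzero polynomial in the $\alpha$'s. Regularity powers the greedy step in the local regime: if no row of a subset $U\subseteq R$ had a column private to $U$, every column of $N(U)$ would be covered at least twice, forcing $|U|(b+1)\geq 2|N(U)|\geq 2(|U|+b)$, which contradicts $|U|\leq 2b/(b-1)$.

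The main obstacle is the complementary regime $|U|>2b/(b-1)$, in which no row of $U$ has a private column and the naive greedy ordering can stall --- for instance, with $b=2$ on a four-row rotation-design pattern in which every pair of rows meets in a single column. To handle this, I would induct on $|R|$ and treat the $b$ extra columns in $|N(R)|\geq|R|+b$ as a reserve: peel off a row $i_0$ (regularity is hereditary on $R\setminus\{i_0\}$), apply the inductive hypothesis to obtain a nonvanishing minor for the smaller regular pattern, and then splice $i_0$ back in by a column-exchange that routes through the reserve and introduces a fresh Reed--Solomon monomial whose leading term cannot cancel against the existing determinant. Verifying that this exchange preserves a visible nonzero leading monomial --- so that the enlarged determinant remains a nonzero polynomial --- is the most delicate part, and it is precisely where the full $b$-deficiency $|N(U)|\geq|U|+b$ is used rather than merely the Hall condition $|N(U)|\geq|U|$.
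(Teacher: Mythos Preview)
Your reduction in the first paragraph is essentially right and matches the paper in spirit: with $C_{\mathrm{col}}$ the simple parity code, a codeword supported on $E$ must have its $i$-th row equal to a scalar multiple of the unique min-weight row-codeword $y^{(i)}$ supported on $E_i$, and the column parity then forces a linear dependence among the $y^{(i)}$. (A small quibble: you don't need Lemma~\ref{lem:1} or item~(3) of Proposition~\ref{Prop:DualToPrimal} here---those concern \emph{given} MR instantiations, whereas you are \emph{building} one and may simply set $\alpha_i=1$.)

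Where you diverge from the paper is in how you prove the independence of the $y^{(i)}$, and this is where the proposal has a genuine gap. You commit to a Reed--Solomon row code, so the $y^{(i)}$ become specific products $p_i(x)=\prod_{k\notin E_i}(x-\alpha_k)$, and you must show some $r\times r$ minor of $(p_i(\alpha_j))$ is a nonzero polynomial in the $\alpha$'s. Your greedy argument would need, for \emph{some} ordering $i_1,\dots,i_r$, that $E_{i_\ell}\not\subseteq \bigcup_{\ell'<\ell}E_{i_{\ell'}}$ at every step; your own four-row example with $b=2$ (e.g.\ $E_1=\{1,2,3\}$, $E_2=\{1,4,5\}$, $E_3=\{2,4,6\}$, $E_4=\{3,5,6\}$) shows this is impossible, since any three of the $E_i$ already cover $N(R)$. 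Your proposed inductive ``column-exchange'' is not an argument yet: after peeling off $i_0$ and getting an $(r-1)\times(r-1)$ nonvanishing minor, adding back $i_0$ with some $j_0\in E_{i_0}$ does \emph{not} produce a block-triangular matrix (other rows may hit $j_0$), so the new determinant is a genuine sum of cofactors and you give no mechanism to isolate a surviving monomial. Regularity gives you a Hall surplus $|N(U)|\ge |U|+b$, hence a matching, but a matching alone only guarantees a nonzero \emph{diagonal term} in the minor, not a nonzero determinant.

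The paper sidesteps this entirely by \emph{not} fixing $C_{\mathrm{row}}$ to be MDS. Instead it tailors $C_{\mathrm{row}}$ to $E$: it takes as generators one generic vector $\mathbf w_i$ supported on each $E_i$, pads with $d=(v-b)-u$ fully generic vectors $\mathbf z_1,\dots,\mathbf z_d$ on $V$, and unit vectors off $V$. The key Claim is then that no nontrivial combination of $\{\mathbf w_i\}_{i\ne j}\cup\{\mathbf z_k\}$ can be supported inside $E_j$; if it were, restricting to $V\setminus E_j$ gives a square $(u-1+d)\times(v-b-1)$ minor that must be identically singular, which by genericity forces a Hall defect in the bipartite graph $(U\setminus\{j\},V\setminus E_j)$, and that defect translates directly into a violation of regularity on a sub-grid $U'''\times V'''$. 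This is exactly the clean Hall-plus-genericity step your approach is missing: by letting the $\mathbf w_i$ be \emph{independent} generic vectors (rather than the structured Reed--Solomon products $p_i$), full rank of the relevant minor is equivalent to the existence of a matching, which regularity supplies.
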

\begin{proof}
We fix $C_{\mathrm{col}}$ to be the simple parity code, i.e., we set all $\{\alpha_i^{(1)}\}, i\in [m]$ in~(\ref{Eqn:Top1}) to one, and focus on constructing the code $C_{\mathrm{row}}.$ Let $U\times V,$ $|U|=u,$ $ |V|=v$ be the smallest enclosing sub-grid for $E.$ In what follows we often find it convenient to represent $E$ by the bipartite graph $G$ with node set $U\cup V$ and edge set representing $E$ in the natural way. By~(\ref{Eqn:RegIntersect}) we have
\begin{equation}
|E| = u(b+1) \leq v+(u-1)b.
\end{equation}
Thus $u\leq v-b.$ Let $d=(v-b)-u.$ We set $C_{\mathrm{row}}$ to be the linear space spanned by $(n-v)+(u+d) = n-b$ vectors: $(n-v)$ unit vectors $\{{\bf e}_i\}_{i\in [n]\setminus V}$ and $u+d$ vectors ${\bf w}_1,\dots,{\bf w}_u,{\bf z}_1,\ldots,{\bf z}_d \in \mathbb{F}^n,$ over some large finite field $\mathbb{F}.$ Note that this constitutes a valid choice of the row code as the co-dimension is necessarily at least $b.$ We constrain vectors $\{{\bf w}_i\}$ and $\{{\bf z}_i\}$ to have no support outside of $V.$ Therefore we often treat these vectors as elements of $\mathbb{F}^v$ rather than $\mathbb{F}^n.$ Furthermore, for every $i\in [u],$ we constrain ${\bf w}_i$ to have no support outside of the support of the $i$-th row of $E.$  Let $M\in \mathbb{F}^{(u+d)\times v}$ be the matrix whose rows are vectors $\{{\bf w}_i\}$ and $\{{\bf z}_i\}.$
We pick the field $\mathbb{F}$ to be sufficiently large and select $\{{\bf w}_i\}$ and $\{{\bf z}_i\}$ at random from $\mathbb{F}^v$ subject to the support constraints. This allows us to ensure that every minor of $M$ that can have full rank for some choice of $\{{\bf w}_i\}$ and $\{{\bf z}_i\}$ indeed has full rank. In particular for all $U^\prime\subseteq U$ and $V^\prime \subseteq V$ such that there is a matching of size $|U^\prime|$ in $G$ between the nodes $U^\prime$ and $V^\prime$ the minor $M_{U^\prime, V^\prime}$ has full rank. Also, all coordinates of $\{{\bf w}_i\}_{i\in [u]}$ and $\{{\bf z}_i\}_{i\in [d]}$ that do not have to be zero are non-zero.

Below is the key Claim that underlies our proof:

{\it Fix $j\in [u]$ and consider an arbitrary linear combination ${\bf y}$ of vectors $\{{\bf w}_i\}_{i\in [u]\setminus \{j\}}$ and $\{{\bf z}_i\}_{i\in [d]}$ that includes at least one of these vectors with a non-zero coefficient. We claim that
\begin{equation}\label{Eqn:SuppNonCon}
\mathrm{supp}({\bf y})\not\subseteq \mathrm{supp}({\bf w}_j).
\end{equation}
}
We first prove the claim above and then proceed with the proof of the Lemma. Assume~(\ref{Eqn:SuppNonCon}) is violated. Let $U^\prime = U\setminus \{j\}$ and $V^\prime = V\setminus \mathrm{supp}({\bf w}_j).$ It is possible to take a non-trivial linear combination of $\{{\bf w}_i\}_{i\in U^\prime}$ and $\{{\bf z}_i\}_{i\in [d]}$ that has no support in $V^\prime.$ Observe that $|U^\prime|+d=u-1+d=v-(b+1)=|V^\prime|.$ Therefore existence of a linear combination as above implies that a certain $(u-1+d)\times |V^\prime|$ square minor $M^\prime$ of $M$ is degenerate.

By discussion preceding~(\ref{Eqn:SuppNonCon}), we conclude that the restriction of the graph $G$ to nodes $(U^\prime,V^\prime)$ has no matching of size $U^\prime,$ as any such matching together with the fact that vectors $\{z_i\}$ are random with full support could be used to imply that $M^\prime$ is of full rank. Thus by Hall's theorem~\cite[p.55]{Jukna} there exists a set $U^{\prime\prime}\subseteq U^\prime$ such that the size of the neighborhood $N(U^{\prime\prime})$ in $G$ is at most $|U^{\prime\prime}|-\Delta$ for a positive $\Delta.$ Let
$U^{\prime\prime\prime}=U^{\prime\prime}\cup \{j\}$ and $V^{\prime\prime\prime}=N(U^{\prime\prime})\cup \mathrm{supp}({\bf w}_j).$ Further, let $t=|U^{\prime\prime\prime}|.$ We have \begin{equation}\label{Eqn:ViolReg1}
|E\cap(U^{\prime\prime\prime}\times V^{\prime\prime\prime})| = t\cdot (b+1).
\end{equation}
However
\begin{equation}\label{Eqn:ViolReg2}
|V^{\prime\prime\prime}| + (|U^{\prime\prime\prime}|-1)\cdot b= (b+1+(t-1)-\Delta)+(t-1)b=t(b+1)-\Delta.
\end{equation}
Thus restricting $E$ to $U^{\prime\prime\prime}\times V^{\prime\prime\prime}$ violates~(\ref{Eqn:RegIntersect}). This contradiction completes the proof of the Claim. We now use the Claim to prove the Lemma.

Assume for the purpose of contradiction that $C_{\mathrm{col}}\otimes C_{\mathrm{row}}$ does not correct $E.$ Then $C_{\mathrm{col}}\otimes C_{\mathrm{row}}$ contains a codeword ${\bf w}$ such that $\mathrm{supp}({\bf w})\subseteq E.$ We now make two observations:
\begin{itemize}
\item {\it For all $i\in U,$ the restriction of ${\bf w}$ to row $i$ has be a scalar multiple of the vector ${\bf w}_i$ defined above.} This observation follows from the fact that the $i$-th row of ${\bf w}$ is an element of $C_{\mathrm{row}}$ and by the Claim above no element of $C_{\mathrm{row}}$ other than the scalar multiplies of ${\bf w}_i$ has its support inside $\mathrm{supp}({\bf w}_i).$
\item {\it Vectors $\{{\bf w}_i\}_{i\in U}$ are linearly independent.} Again this easily follows for the Claim. Every dependency between $\{{\bf w}_i\}_{i\in U}$ can be used to obtain a linear combination of $\{{\bf w}_i\}_{i\in U\setminus \{j\}}$ whose support falls within $\mathrm{supp}({\bf w}_j)$ for some $j\in U.$
\end{itemize}
By the first bullet above, rows of ${\bf w}$ are scalar multiples of vectors $\{{\bf w}_i\}_{i \in U}.$ However, rows of ${\bf w}$ are linearly dependent as every column of $C_{\mathrm{col}}\otimes C_{\mathrm{row}}$ is an element of $C_{\mathrm{col}}.$ We obtain a contradiction with the second bullet above. This completes the proof of the Lemma.
\end{proof}

Lemma~\ref{Lemma:BplusOne} suffices to establish Theorem~\ref{Th:Class} for erasure patterns whose weight is $b+1$ across all non-empty rows. We now reduce the general case to this special case. In what follows we often use the same character to denote a row of a topology and the set of erased coordinates of that row. Our reduction is based on the following definition.

\begin{definition}\label{Def:Boosting}
Let $E\subseteq [m]\times [n]$ be an erasure pattern for $T_{m\times n}(1,b,0).$ Assume that non-empty rows of $E$ have weights $b+r_1,\ldots,b+r_u,$ where all $\{r_i\}_{i\in [u]}$  are positive. Set $\delta = \sum_{i\in [u]} (r_i - 1).$ We define the {\it boosting} of $E$ to be an erasure pattern $\mathcal B (E)$ for $T_{(m+\delta) \times n}(1,b,0)$ where $\mathcal B (E)$ is obtained via the following process:
\begin{itemize}
\item Each row of $T_{m\times n}(1,b,0)$ that does not intersect $E$ yields a row in $T_{(m+\delta) \times n}(1,b,0)$ that does not intersect $\mathcal B(E)$.
\item Each row ${\bf s}_i$ of $T_{m\times n}(1,b,0)$ that intersects $E$ in $b+r$ coordinates is replaced by $r$ rows ${\bf s}_{i1},\ldots, {\bf s}_{ir},$ where every $\{{\bf s}_{ij}\}_{j\in [r]}$ contains the first $b$ elements of ${\bf s}_i;$ the weight of each $\{{\bf s}_{ij}\}_{j\in [r]}$ is $b+1;$ and the union of supports of all $\{{\bf s}_{ij}\}_{j\in [r]}$ is the support of ${\bf s}_i.$
\end{itemize}
\end{definition}

We demonstrate the concept of boosting by the following example.
\begin{example}\label{Example:Boosting}
A pattern $E$ for $T_{2\times 5}(1,2,0)$ and the boosted pattern ${\mathcal B}(E)$ for $T_{4\times 5}(1,2,0).$
\begin{equation}
E=
\left(
\begin{array}{ccccc}
1 & 1 & 1 &   & 1 \\
1 &   & 1 & 1 & 1 \\
\end{array}
\right)
\ \
\Rightarrow
\ \
{\mathcal B}(E) =
\left(
\begin{array}{ccccc}
1 & 1 & 1 &   &   \\
1 & 1 &   &   & 1 \\
1 &   & 1 & 1 &   \\
1 &   & 1 &   & 1 \\
\end{array}
\right)
\end{equation}
\end{example}

The following lemma shows that boosting preserves regularity.
\begin{lemma}\label{Lemma:BoostingReg}
Let $E\subseteq [m]\times [n]$ be an erasure pattern for $T=T_{m\times n}(1,b,0)$ where every non-empty row has weight $b+1$ or more. Let $E^\prime={\mathcal B}(E)$ be the boosting of $E$ viewed as an erasure pattern for $T^\prime=T_{m^\prime\times n}(1,b,0).$ If $E$ is regular; then $E^\prime$ is also regular.
\end{lemma}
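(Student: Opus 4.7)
The plan is to proceed by induction on $\delta = \sum_{i \in [u]}(r_i - 1)$. The base case $\delta = 0$ is trivial: every non-empty row of $E$ already has weight $b+1$, so $\mathcal{B}(E) = E$ and there is nothing to check. For the inductive step, I will show that a \emph{single} row split preserves regularity, and note that $\mathcal{B}$ can be built up by repeated single splits, each decreasing $\delta$ by one. Iterating then completes the proof, provided the intermediate "first $b$" positions are chosen consistently (which they are, since they are just fixed columns of $[n]$).

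Concretely, pick a non-empty row $i^*$ of $E$ with $r_{i^*} \geq 2$, and write its support as $\{c_1, \ldots, c_b\} \cup \{d_1, \ldots, d_{r_{i^*}}\}$, where $\{c_1, \ldots, c_b\}$ are the shared "first $b$" coordinates. Form $E_1$ on $[m+1] \times [n]$ by replacing row $i^*$ with the pair: row $i^*$ with support $\{c_1, \ldots, c_b, d_1\}$ (weight $b+1$) and row $m+1$ with support $\{c_1, \ldots, c_b, d_2, \ldots, d_{r_{i^*}}\}$ (weight $b + r_{i^*} - 1$). Then $\mathcal{B}(E_1) = \mathcal{B}(E)$ and the $\delta$-parameter of $E_1$ is $\delta - 1$, so by induction it suffices to show $E_1$ is regular for $T_{(m+1) \times n}(1,b,0)$.

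Fix $U_1 \subseteq [m+1]$, $V_1 \subseteq [n]$ of sizes $u_1$ and $v_1$, and split into cases according to how many of $\{i^*, m+1\}$ lie in $U_1$. If neither does, then $E_1 \cap (U_1 \times V_1) = E \cap (U_1 \times V_1)$ and regularity of $E$ applies directly. If exactly one of them lies in $U_1$, the corresponding row of $E_1$ has support contained in the support of row $i^*$ of $E$; letting $U \subseteq [m]$ be $U_1$ with the included index replaced by $i^*$ if necessary, we get $|E_1 \cap (U_1 \times V_1)| \leq |E \cap (U \times V_1)| \leq v_1 + (u_1 - 1)b$ by regularity of $E$.

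The one interesting case is when both $i^*$ and $m+1$ lie in $U_1$. Set $U = U_1 \setminus \{m+1\}$, so $i^* \in U \subseteq [m]$ and $|U| = u_1 - 1$. The supports of rows $i^*$ and $m+1$ in $E_1$ together cover the original support of row $i^*$ in $E$ exactly once, except that the $b$ shared coordinates $\{c_1, \ldots, c_b\}$ are counted twice. Hence
\begin{align*}
|E_1 \cap (U_1 \times V_1)|
&= |E \cap (U \times V_1)| + |\{c_1, \ldots, c_b\} \cap V_1| \\
&\leq \bigl(v_1 + (u_1 - 2)b\bigr) + b \;=\; v_1 + (u_1 - 1)b,
\end{align*}
where the inequality uses regularity of $E$ on $(U, V_1)$ together with $|\{c_1, \ldots, c_b\} \cap V_1| \leq b$. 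The main (mild) obstacle is precisely the arithmetic in this case: the $b$ units of slack that regularity of $E$ gives up on dropping from $|U_1| = u_1$ to $|U| = u_1 - 1$ exactly absorb the $b$ double-counted shared coordinates, making the final inequality tight in the worst case and leaving no room to spare.
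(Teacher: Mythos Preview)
Your proof is correct, and it takes a genuinely different route from the paper's argument.

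The paper proves the lemma directly, in one shot: given an arbitrary sub-grid $U'\times V$ of the boosted pattern, it projects $U'$ down to the set $U=\{i\in[m]:\exists j,\ (i,j)\in U'\}$ of original rows, and then establishes a per-row inequality
\[
\sum_{j:(i,j)\in U'}\bigl(|{\bf s}_{ij}\cap V|-b\bigr)\ \le\ |{\bf s}_i\cap V|-b,
\]
valid for every $i\in U$. Summing over $i$ gives $\Delta(E')\le\Delta(E)\le 0$. The point of this inequality is exactly your ``Case~2'' arithmetic, but carried out for an arbitrary number $t\ge 1$ of boosted copies of row $i$ simultaneously: if $V$ meets the shared $b$ coordinates in $c\le b$ of them, then the left side is at most $tc+t'-tb$ for some $t'\le t$, while the right side is $c+t'-b$, and $tc-tb\le c-b$ since $c\le b$.

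Your approach instead peels off one split at a time and inducts on $\delta$. The core arithmetic is identical---the $b$ units of slack lost when dropping from $u_1$ to $u_1-1$ exactly pay for the at most $b$ double-counted shared coordinates---but you only ever need the $t=2$ case, which keeps the casework very clean. The price is the extra bookkeeping to check that $\mathcal{B}(E_1)=\mathcal{B}(E)$; this does hold because the $c_k$'s, being the $b$ smallest column indices of the original row, remain the $b$ smallest column indices of the new row $m+1$, so the ``first $b$'' convention is preserved through the iteration (as you note). The paper's argument is more concise and avoids the induction; yours is more modular and perhaps easier to verify step by step. Both are complete.
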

\begin{proof}
Let $U^\prime \times V$ be an arbitrary sub-grid of $T^\prime.$ Let $|U^\prime|=u^\prime,$ $|V|=v.$ Note that rows of $T^\prime$ that arise by boosting the rows of $E$ have two indices $(i,j).$ Let $U=\{i\in [m] \mid \exists j : (i,j) \in U^\prime\}.$ Let $|U|=u.$ Note that $U\times V$ is a sub-grid of $T.$ In what follows we argue that $E^\prime$ does not violate~(\ref{Eqn:RegIntersect}) on $U^\prime\times V;$ since $E$ does not violate~(\ref{Eqn:RegIntersect}) on $U\times V.$ Consider
\begin{equation}\label{Eqn:Reg1}
\begin{array}{lll}
\Delta(E^\prime) & = & |E^\prime\cap \left(U^\prime \times V \right)| - (v+(u^\prime -1) b)   \\
                 & = & \sum\limits_{(i,j)\in U^\prime} \left(|{\bf s}_{ij}\cap V|-b\right)-(v-b) \\
                 & = & \sum\limits_i\sum\limits_{j:(i,j)\in U^\prime}\left(|{\bf s}_{ij}\cap V|-b\right)-(v-b).
\end{array}
\end{equation}
We claim that for all $i\in [m]:$
\begin{equation}\label{Eqn:Reg2}
\sum\limits_{j:(i,j)\in U^\prime}\left(|{\bf s}_{ij}\cap V|-b\right)\leq |{\bf s}_{i}\cap V|-b.
\end{equation}
To see this assume that $V$ intersects the sets of $b$ first elements of ${\bf w}_i$ in $c\leq b$ coordinates. Let the sum above include $t$ terms. Expression on the left simplifies to $tc+t^\prime -tb,$ for $t^\prime\leq t.$ Expression on the right simplifies to $c+t^\prime-b.$ It remains to note that
$$
tc+t^\prime -tb\leq c+t^\prime-b\quad \mathrm{as}\quad c\leq b.
$$
and~(\ref{Eqn:Reg2}) follows. Combining~(\ref{Eqn:Reg1}) and~(\ref{Eqn:Reg2}) we conclude that
$$
\Delta(E^\prime) \geq \sum\limits_i \left(|{\bf s}_{i}\cap V|-b\right) -(v-b) = \Delta(E)\geq 0.
$$
This completes the proof of the Lemma.
\end{proof}

\begin{proofof}{of Theorem~\ref{Th:Class}}
By Lemma~\ref{Lemma:OmitLow} we can assume that every row of $E$ has weight at least $b+1.$ Consider the boosted pattern $\mathcal B (E)$ for $T^\prime=T_{m^\prime\times n}(1,b,0).$ By Lemma~\ref{Lemma:BoostingReg}, $\mathcal B(E)$ is regular. Thus by Lemma~\ref{Lemma:BplusOne}, $\mathcal B(E)$ is correctable for $T^\prime.$ Let $C_{\mathrm{col}}\otimes C_{\mathrm{row}}$ be the instantiation that corrects $\mathcal B(E)$ obtained via Lemma~\ref{Lemma:BplusOne}. Note that $C_{\mathrm{col}}$ is the simple parity check code that we denote by $P_{m^\prime}.$

We claim that the tensor product of the parity check code $P_m$ with $C_{\mathrm{row}}$ corrects $E$ for $T.$ Assume the contrary. Let ${\bf w}$ be the codeword of $P_m\otimes C_{\mathrm{row}}$ where $\mathrm{supp}({\bf w})\subseteq E.$ Let $u$ be the number of non-zero rows in $E.$ For $i\in [u],$ let ${\bf s}_i$ be the $i$-th non-zero row of ${\bf w}.$ Assume for all $i\in [u],$  the weight of ${\bf s}_i$ is $b+r_i.$ We now use ${\bf w}$ to obtain a codeword ${\bf w}^\prime$ that resides on $\mathcal B(E)$ for $P_{m^\prime}\otimes C_{\mathrm{row}}$ instantiation of $T^\prime.$ Our construction is based on the following observation:
\begin{equation}\label{eqn:DimRestr}
\dim \left(C_{\mathrm{row}}|_{\mathrm{supp}({\bf w}_i)}\right) = r_i.
\end{equation}
We now prove the observation. Firstly, consider vectors $\{{\bf w}_{ij}\}_{j\in [r_i]}$ with supports $\{{\bf s}_{ij}\}_{j\in [r_i]}$ that are used in the construction of the linear space $C_{\mathrm{row}}$ in the proof of Lemma~\ref{Lemma:BplusOne}. These $r_i$ vectors are clearly linearly independent given their support structure. Secondly, note that if $C_{\mathrm{row}}$ had any vector, other then the linear combinations of $\{{\bf w}_{ij}\}_{j\in [r_i]},$ reside on $\mathrm{supp}({\bf w}_i);$ then we would immediately get a contradiction to the key Claim~(\ref{Eqn:SuppNonCon}) in the proof of Lemma~\ref{Lemma:BplusOne}.

Using the observation and the argument above we can represent every ${\bf w}_i$ as a unique linear combination of vectors $\{{\bf w}_{ij}\}_{j\in [r_i]}$ with supports $\{{\bf s}_{ij}\}_{j\in [r_i]}.$ Now the collection of vectors $\{{\bf w}_{ij}\}_{i\in [u], j\in [r_i]}$ yields a codeword ${\bf w}^\prime$ of $P_{m^\prime}\otimes C_{\mathrm{row}}$ that resides entirely on $\mathcal B(E),$ contradicting the fact that $\mathcal B(E)$ is correctable for~$T^\prime.$
\end{proofof}

\section{Maximally recoverable codes over linear fields for $T_{m \times n}(1,0,2)$}\label{Sec:H2}

Recall that correctable patterns for this topology are fully characterized~\cite{BHH,GHJY}:

\begin{lemma}\label{Lemma:SimSpec}
A pattern $E\subseteq [m]\times [n]$ is correctable for $T_{m\times
  n}(1,0,h)$ iff it can be obtained by erasing at most one coordinate in every column of $[m]\times [n]$ and then additionally up to $h$ more coordinates.
\end{lemma}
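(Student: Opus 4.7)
My plan is to analyze the rank of the parity check matrix $H$ of an arbitrary instantiation of $T_{m\times n}(1,0,h)$, restricted to the columns indexed by the erasure pattern $E$. Writing $c_j := |E \cap ([m] \times \{j\})|$ for the number of erasures in column $j$, the lemma's decomposition condition is equivalent to $|E| - |\{j : c_j \geq 1\}| \leq h$, since $\sum_j \max(c_j - 1, 0)$ counts exactly the erasures in excess of one per column. Both directions of the iff then reduce to rank statements about $H|_E$.

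For necessity, I would observe that $H$ has $n$ column-parity rows (the row for column $j$ being supported only on positions of column $j$) together with $h$ global-parity rows. After restriction to columns indexed by $E$, the column-parity row for $j$ vanishes identically whenever $c_j = 0$, so
\[
\mathrm{rank}(H|_E) \;\leq\; |\{j : c_j \geq 1\}| + h.
\]
If $E$ is correctable by some instantiation then this rank must equal $|E|$, yielding the required bound.

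For sufficiency, suppose the condition holds; I will exhibit a single instantiation correcting $E$. Set $\alpha_i = 1$ for every $i$, so the column code is the simple parity check code $[m,m-1,2]$. For each $j$ with $c_j \geq 1$, arbitrarily designate one erased position $(i_j, j) \in E$ and call the remaining $r := |E| - |\{j : c_j \geq 1\}| \leq h$ positions \emph{residual}; enumerate the residuals as $(a_1, b_1), \ldots, (a_r, b_r)$. Define the global coefficients by $\gamma_{a_t, b_t}^{(k)} := \delta_{k,t}$ for $1 \leq t \leq r$, $1 \leq k \leq h$, and $\gamma_{ij}^{(k)} := 0$ for every other cell. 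Now, by the column parity, each designated $x_{i_j, j}$ is a linear combination of the other $m-1$ symbols in column $j$; substituting these expressions into the $k$-th global parity makes the coefficient of a residual $x_{a_t, b_t}$ equal to $\gamma_{a_t, b_t}^{(k)} - \gamma_{i_{b_t}, b_t}^{(k)} = \delta_{k,t}$, since designated positions carry zero $\gamma$-vectors. The resulting $h \times r$ system on the residuals thus has an $r \times r$ identity as its top block, so the residuals are uniquely recoverable from the first $r$ substituted global equations, after which the designated positions follow from their column parities.

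I do not anticipate any serious technical obstacle. The only delicate point is the bookkeeping in separating designated from residual positions and carrying out the substitution, and the (minor) verification that the choice of coefficients gives a legitimate instantiation of the topology (it does, as Definition~\ref{Def:Topology} places no non-degeneracy constraint on the $\gamma_{ij}^{(k)}$). Since the lemma only asks for \emph{some} instantiation to correct $E$ rather than a single maximally recoverable code handling every pattern simultaneously, the explicit construction above is enough.
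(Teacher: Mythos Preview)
The paper does not prove this lemma; it is quoted from prior work~\cite{BHH,GHJY} and used as a black box. So there is no in-paper argument to compare against, and your self-contained proof is a genuine addition.

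Your proof is correct. The necessity direction via the rank bound $\mathrm{rank}(H|_E)\le |\{j:c_j\ge 1\}|+h$ is exactly the right observation. For sufficiency, your construction works, though it is worth being explicit about what it actually does: since you set every $\gamma_{ij}^{(k)}$ to zero except at residual positions, the designated positions carry zero coefficients in all global parities, and hence the ``substitution'' of $x_{i_j,j}=\sum_{i\ne i_j}x_{ij}$ into the $k$-th global equation changes nothing. The $k$-th global constraint is already literally $x_{a_k,b_k}=0$ before any substitution. In other words, the code you build is degenerate---it pins each residual coordinate to zero---so the residuals are trivially ``recovered'' as zero, after which the designated positions follow from the column parities. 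This is fine for Definition~\ref{Def:CorrectablePatt}, which only asks for \emph{some} instantiation correcting $E$, and you rightly flag that point; but you might streamline the write-up by dropping the substitution bookkeeping and simply noting that any codeword supported on $E$ has all residuals zero by the global equations and then all designated entries zero by the column parities.
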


We now prove Theorem~\ref{Th:H2}.

\begin{proofof}{of Theorem~\ref{Th:H2}}
Let $M$ be the smallest power of $2$ that is no less than $m$ and $N$ be the smallest power of $2$ that is no less than $n.$ We now present an explicit MR instantiation of $T_{m\times n}(1,0,2)$ over $F_{MN}:$
\begin{itemize}
\item We set all $\left\{\alpha_i^{(1)}\right\}_{i\in [m]}$ in~(\ref{Eqn:Top1}) to be equal to one.
\item To complete the specification of the code we need to specify $\left\{\gamma_{ij}^{(1)}, \gamma_{ij}^{(2)}\right\}_{i\in [m], j\in [n]}\in F_{MN}.$ In order to do so let us fix a set $\{s_1,s_2,\ldots,s_m\}\subseteq\mathbb{F}_{MN}$ to be a subset of an additive subgroup $G\subseteq\mathbb{F}_{MN}$ of size $M$ and $c_1,\ldots,c_n\in\mathbb{F}_{MN}$ to be field elements, such that $c_{j_1}\not \in c_{j_2}+G,$ for $j_1\ne j_2.$ (In other words $\{c_j\}_{j\in [n]}$ belong to different cosets of $\mathbb{F}_{MN}$ modulo the subgroup $G.$) For $i\in [m], j\in [n]$ we set
    \begin{equation}\label{Eqn:102}
    \begin{array}{lll}
    \gamma_{ij}^{(1)} & = & s_i,          \\
    \gamma_{ij}^{(2)} & = & s_i^2 + c_j\cdot s_i.  \\
    \end{array}
    \end{equation}
\end{itemize}

By Lemma~\ref{Lemma:SimSpec} it suffices to show that every pattern of erasures obtained by erasing one location per column and two more arbitrary locations is correctable by our instantiation of $T_{m\times n}(1,0,2).$ Note that every column that carries just one erasure easily corrects this erasure since all $\alpha_i^{(1)} = 1.$ We consider two cases:
\begin{itemize}
\item {\it Some column $j\in [n]$ carries three erasures.} Assume erasures are in rows $i_1,i_2$ and $i_3.$ Solving linear system~(\ref{Eqn:Top1}),~(\ref{Eqn:Top3}) amounts to inverting a $3\times 3$ matrix, whose determinant is non-zero:
$$\det\begin{pmatrix}
1 & 1 & 1 \\
s_{i_1} & s_{i_2} & s_{i_3} \\
s_{i_1}^2+c_j\cdot s_{i_1} & s_{j_2}^2+c_j\cdot s_{j_2} & s_{i_3}^2+c_j\cdot s_{i_3}
\end{pmatrix}=\det\begin{pmatrix}
1 & 1 & 1 \\
s_{i_1} & s_{i_2} & s_{i_3} \\
s_{i_1}^2 & s_{i_2}^2 & s_{i_3}^2 \\
\end{pmatrix}\neq 0.$$
Therefore the erasure pattern is correctable.

\item {\it There are two distinct columns $j_1,j_2 \in [n]$ each carrying two erasures.} Assume column $j_1$ has erasures in rows $i_1$ and $i_2,$ while column $j_2$ has erasures in rows $i_3$ and $i_4.$ This time solving linear system~(\ref{Eqn:Top1}),~(\ref{Eqn:Top3}) amounts to inverting a $4\times 4$ matrix, whose determinant is again non-zero:
\begin{align*}
&\det\begin{pmatrix}
1 & 1 & 0 & 0 \\
0 & 0 & 1 & 1 \\
s_{i_1} & s_{i_2} & s_{i_3} & s_{i_4} \\
s_{i_1}^2+c_{j_1}\cdot s_{i_1} & s_{i_2}^2+c_{j_1}\cdot s_{i_2} & s_{i_3}^2+c_{j_2}\cdot s_{i_3} & s_{i_4}^2+c_{j_2}\cdot s_{i_4}
\end{pmatrix}\\
=&\det\begin{pmatrix}
1 & 0 & 0 & 0 \\
0 & 0 & 1 & 0 \\
s_{i_1} & s_{i_1}+s_{i_2} & s_{i_3} & s_{i_3}+s_{i_4} \\
s_{i_1}^2+c_{j_1} s_{i_1} & (s_{i_1}+s_{i_2})^2+c_{j_1}(s_{i_1}+s_{i_2}) & s_{i_3}^2+c_{j_2} s_{i_3} & (s_{i_3}+s_{i_4})^2+c_{j_2}(s_{i_3}+s_{i_4})
\end{pmatrix} \\
=&\det\begin{pmatrix}
s_{i_1}+s_{i_2} & s_{i_3}+s_{i_4} \\
(s_{i_1}+s_{i_2})^2+c_{j_1}(s_{i_1}+s_{i_2}) & (s_{i_3}+s_{i_4})^2+c_{j_2}(s_{i_3}+s_{i_4})
\end{pmatrix} \\
=&(s_{i_1}+s_{i_2})(s_{i_3}+s_{i_4})(s_{i_1}+s_{i_2}+s_{i_3}+s_{i_4}+c_{j_1}+c_{j_2})\neq0.
\end{align*}
In the last step, we used the fact that $s_{i_1}+s_{i_2}+s_{i_3}+s_{i_4}+c_{j_1}+c_{j_2}\neq0$. This follows from $s_{i_1}+s_{i_2}+s_{i_3}+s_{i_4}\in G$ and $c_{j_2}\not \in c_{j_1}+G.$
\end{itemize}
\end{proofof}

\section{Open problems}\label{Sec:Open}
The theory of maximally recoverable codes is in its infancy. There is a wide array of questions that remain open. Here we highlight some of the prominent ones:
\begin{enumerate}
\item The topology $T_{m\times n}(1,0,h)$ is well studied in the
  literature and used in practice. Yet the alphabet size of MR codes
  for this topology is poorly understood. There is a linear
  $\Omega(mn)$ lower bound that applies when $h\geq 2.$ For $h=2$ this
  bound is asymptotically tight by Theorem~\ref{Th:H2}. For $h\geq 3$
  the best constructions~\cite{GHJY} use alphabet of size
  $O((mn)^{ch)})$ for constants $c < 1$. Obtaining a super-linear
  lower bound or improving the upper bound would be of great
  interest.

\item Establish Conjecture~\ref{Conj:ClassMain} regarding correctable
  error patterns in $T_{m \times n}(a,b,0)$ for  $a>1$. What are the
  correctable erasure patterns for $T_{m\times n}(a,b,h)$ for general
  $h > 0$?

\item We do not know how to construct Maximally Recoverable tensor
  product codes, even for the special case of $T_{m\times n}(1,b,0)$
  where we now know a classification of correctable failure patterns.

\item For $T_{n \times n}(1,1,h)$,   there is also an explicit  construction \cite[Theorem 31]{GHJY} that gives
MR codes for this topology over fields of size $\exp(O(n \log n))$,
matching the probabilistic

 construction. This is still  from our
lower bound of $\exp((\log(n))^2)$. Closing this gap is an interesting
open problem. Can one obtain a lower bound of
  $\exp\left(n^{\Omega(1)}\right)$ for some $T_{n \times n}(a,b,h)?$

\item Can one generalize Corollary~\ref{Cor:LB} to get a lower bound
  of $\exp(\log m\cdot \log n)$ for $T_{m\times n}(1,1,1)?$ Our
  current bound is $\exp((\min(\log m, \log n))^2)$.

\end{enumerate}

\bibliography{code-locality-Aug}

\newpage

\appendix

\section{Related Work}\label{Sec:Work}

The first family of codes with locality for applications in storage comes from~\cite{HCL}. That paper also introduced the concept of maximal recoverability, in a restricted setting which does not allow for locality among parity check symbols. In this restricted setting, they gave a combinatorial characterization of correctable failure patterns via Hall's theorem.

The approach of using a topology to ensure local recovery from correlated failures has been studied in the literature~\cite{BHH,Blaum,GHJY,BPSY} and is used in practice~\cite{HuangSX,MLRH14}. The first definition of maximal recoverability for an arbitrary topology was given~\cite{GHJY}.

The work of~\cite{GHSY} introduced a formal definition of locality,
and focused on codes that guarantee locality for a single failure. For
this simple setting, they were able to show that optimal codes must
have a certain natural topology. Maximally recoverable codes for that
topology had been studied earlier in the work of~\cite{BHH,Blaum}
where they were called PMDS codes. The best known general
constructions are due to~\cite{GHJY}. A construction of codes over
fields of linear size with the largest possible distance for
$T_{m\times n}(1,0,h)$ (a weaker property than maximal recoverability)
has been given in~\cite{TB}.  Discussion regarding the importance of
using small finite fields in codes for storage can be found
in~\cite[Section 2]{Plank}.

The study of codes with locality and in particular maximally recoverable codes is distantly related to the study of Locally Decodable Codes (LDCs)~\cite{Y_now}. The key differences are as follows: LDCs can be viewed as codes where every symbol has low locality even after a constant-fraction of codeword coordinates are erased. The main challenge is to minimize the codeword length of these codes given the locality constraints. Instead, MR codes only provide locality after certain structured failures, the layout of which is known at the stage of code design and that are few in number. Codeword length is fixed by specifying the topology, and the key challenge is to minimize the field size while providing optimal erasure correction guarantees.

The porblem addressed in Lemma \ref{Lemma:Main} can be viewed as an instance of the {\em critical problem} of Crapo and Rota from the 70s~\cite{CR}, where the goal is to find the largest dimensional subspace in $\F_2^N$ that does not intersect  a given set $S \subset \F_2^N$, which generalizes the problem of finding the maximum rate binary linear code. Identify $[N]$ with the edges of  $K_{n,n}$. Given $\gamma:[N] \to \F_2^d$, the indicators of all sets of edges $E$ such that $\sum_{e \in E}\gamma_e =0$ is a subspace, of dimension $N -d$ or more. Our goal is to find the largest such subspace that does not intersect the set $S \subset \F_2^N$ of indicators of all simple cycles.

Another related problem had been recently studied in~\cite{FGT} in the
context of derandomizing parallel algorithms for matching. The authors
also consider the problem of assigning weights to edges of a graph, so
that simple cycles carry non-zero weight. The key differences from our
setting are: we need a single assignment while~\cite{FGT} may have
multiple assignments; we care about all simple cycles,
while~\cite{FGT} only needed non-zero weights on short cycles; we are
interested in fields of characteristic $2$ while~\cite{FGT} work in
characteristic zero.

Our work on MR codes for $T_{m\times n}(a,b,0)$  bears some
similarities to the study of weight heirarchies of product
codes~\cite{SW,WYang}. The difference is that there one is interested
in understanding the relation between weight hierarchies of codes and
their tensor products, while we are concerned with minimizing the
field size of codes whose tensor products have optimal erasure
correction capabilities.

\section{More Proofs}
\label{sec:app}

\begin{proof}[Proof of Lemma \ref{Lemma:RegNecess}]
Let $U\times V$ be a sub-grid of $[m]\times [n],$
where~(\ref{Eqn:RegIntersect}) is violated. Let $|U|=u$ and
$|V|=v$. Consider the collection of variables $\{x_{ij}\}$ from
Definition~\ref{Def:Topology}. Let us restrict our attention to
$\{x_{ij}\}$ where $i\in U$ and $j\in V$ and set all other $x_{ij}$ to
zero. By the last bullet in Section~\ref{Sec:Notation}, the rank of
the row and column constraints on the variables $\{x_{ij}\}_{i\in U,
  j\in V}$ is at most $va + ub -ab$.

We next set all $x_{ij}$s outside of the set $E$ to zero. Setting variables outside $E$ to $0$ can only reduce the rank of the row and column contraints further. The  number of surviving variables is  $|E| > ua +vb -ab$, since ~(\ref{Eqn:RegIntersect}) is violated, which exceeds the rank of the constraints. So there exists a codeword supported on $E$, and $E$ is not correctable by any code $C$ that instantiates $T_{m\times n}(a,b,0)$.
\end{proof}

\begin{proof}[Proof of Proposition \ref{Prop:DualToPrimal}]
We proceed item by item.
\begin{enumerate}
\item We first argue that $\dim C \geq (m-a)(n-b)-h.$ To see this note that constraints from groups one and two in Definition~\ref{Def:Topology} yield a tensor product of linear codes $C_{\mathrm{col}}$ and $C_{\mathrm{row}},$ where $\dim C_{\mathrm{col}}\geq m-a$ and $\dim C_{\mathrm{row}} \geq n-b.$ We have,
    \begin{equation}\label{Eqn:ProdDim}
    \dim C_{\mathrm{col}}\otimes C_{\mathrm{row}} = \dim C_{\mathrm{col}}\cdot \dim C_{\mathrm{row}}\geq (m-a)(n-b).
     \end{equation}
     Adding group 3 constrains can reduce the dimension by no more than $h.$

    Now assume $\dim C > (m-a)(n-b)-h.$ Consider another instantiation $C^\prime$ of $T_{m \times n}(a,b,h)$ where $\dim C^\prime_{\mathrm{col}}=m-a,$ $\dim C^\prime_{\mathrm{row}}=n-b,$ and thus constraints from groups one and two yield a code of dimension $(m-a)(n-b).$ Let $S\subseteq [m]\times [n]$ be the information set for that code. Set constrains in group 3 to be linearly independent and have no support outside of $S.$ This implies $\dim C^\prime = (m-a)(n-b)-h.$ Let $E\subseteq [m]\times [n]$ be a complement of the information set of $C^\prime.$ It is easy to see that $C^\prime$ recovers $E.$ However $C$ cannot recover $E$ as $\dim C > mn-|E|.$

    Therefore $\dim C = (m-a)(n-b)-h \geq \dim C_{\mathrm{col}}\cdot \dim C_{\mathrm{row}} - h.$ Thus $\dim C_{\mathrm{col}}=m-a$ and $\dim C_{\mathrm{row}}=n-b.$

\item Fix $C^\prime_{\mathrm{col}}$ and $C^\prime_{\mathrm{row}}$ to be MDS codes. Clearly, $U$ is an information set of $C^\prime_{\mathrm{col}}$ and $V$ is an information set of $C^\prime_{\mathrm{row}}.$ Therefore $U\times V$  is an information set of $C^\prime_{\mathrm{col}}\otimes C^\prime_{\mathrm{row}}.$ To complete the specification of $C^\prime,$ fix constrains in group 3 to have no support outside of $U\times V$ and define an MDS code of co-dimension $h$ on $U\times V.$ Now $C^\prime|_{U\times V}$ is an MDS code of dimension $(m-a)(n-b)-h.$ Thus $C|_{U\times V}$ also has to be an MDS code of dimension $(m-a)(n-b)-h,$ as $C$ corrects every pattern of erasures that is corrected by $C^\prime.$

\item Let $C$ be an arbitrary MR instantiation. Our goal here is to show that under the mild technical assumption~(\ref{Eqn:TechLoc}) both column restrictions of $C$ and row restrictions of $C$ have to be MDS codes of dimensions (respectively) $m-a$ and $n-b$\footnote{One can show that in general the converse is not true; a tensor product of two MDS codes is not necessarily maximally recoverable.}.

    If suffices to prove this claim for column codes. Let $C_{\mathrm{col}}^\prime$ be the restriction of $C$ to column $j.$ Clearly, we have $C_{\mathrm{col}}^\prime \subseteq C_{\mathrm{col}}.$ Thus $\dim C_{\mathrm{col}}^\prime\leq C_{\mathrm{col}} = m-a,$ where the latter identity follows from item 1 above. Observe that if we show that $C_{\mathrm{col}}^\prime$ is an $[m,m-a,a+1]$ code, this would in particular imply that $C_{\mathrm{col}}^\prime=C_{\mathrm{col}}.$

    Note that $C_{\mathrm{col}}^\prime$ is an $[m,m-a,a+1]$ code if and only if every $(m-a)$-sized subset its coordinates is an information set. Assume that there exists some subset $U\subseteq [m],$ $|U|=m-a$ that is not an information set of $C_{\mathrm{col}}^\prime.$ In particular, there exists a linear dependence between the symbols of $C_{\mathrm{col}}^\prime|_U.$ Let $V\subseteq [n],$ $|V|=n-b$ be arbitrary. By the item above, for all $h$-sized sets $H\subseteq U\times V,$ the set $\left(U\times V\right)\setminus H$ is an information set for $C.$ By~(\ref{Eqn:TechLoc}) it is possible to pick $H$ so that $\left(U\times V\right)\setminus H$ contains a complete column of $U\times V.$ In such case $\left(U\times V\right)\setminus H$ cannot be an information set as entries of the column are linearly dependent, and we arrive at a contradiction.
\end{enumerate}
\end{proof}

\end{document}